\newcommand{\eat}[1]{}
\newenvironment{sql}%
 {\vskip 5pt\begin{list}{}{%
  \setlength{\topsep}{0pt}\setlength{\partopsep}{0pt}\setlength{\parskip}{0pt}%
  \setlength{\parsep}{0pt}\setlength{\labelwidth}{0pt}%
  \setlength{\rightmargin}{0pt}\setlength{\leftmargin}{0pt}%
  \setlength{\labelsep}{0pt}%
  \obeylines\@vobeyspaces\normalfont\ttfamily%
  \item[]}}
 {\end{list}\vskip5pt\noindent}
\newtheorem{ex}{Example}
\newtheorem{thm}{Theorem}
\newcommand{\RED}[1]{ {\color{red}#1} }
\begin{document}

\date{January 2017}

\title{Dot-Product Join: An Array-Relation Join Operator for\\
Big Model Analytics}

\author{
Chengjie Qin \hspace*{2cm} Florin Rusu\\
       \small{University of California Merced}\\
       \small{5200 N Lake Road}\\
       \small{Merced, CA 95343}\\
       \small\texttt{\{cqin3, frusu\}@ucmerced.edu}
}

\maketitle

\begin{abstract}

Big Model analytics tackles the training of massive models that go beyond the available memory of a single computing device, e.g., CPU or GPU. It generalizes Big Data analytics which is targeted at how to train memory-resident models over out-of-memory training data.
In this paper, we propose an in-database solution for Big Model analytics. We identify dot-product as the primary operation for training generalized linear models and introduce the first array-relation dot-product join database operator between a set of sparse arrays and a dense relation. This is a constrained formulation of the extensively studied sparse matrix vector multiplication (SpMV) kernel. The paramount challenge in designing the dot-product join operator is how to optimally schedule access to the dense relation based on the non-contiguous entries in the sparse arrays. We prove that this problem is NP-hard and propose a practical solution characterized by two technical contributions---dynamic batch processing and array reordering. We devise three heuristics -- LSH, Radix, and K-center -- for array reordering and analyze them thoroughly. We execute extensive experiments over synthetic and real data that confirm the minimal overhead the operator incurs when sufficient memory is available and the graceful degradation it suffers as memory becomes scarce. Moreover, dot-product join achieves an order of magnitude reduction in execution time over alternative in-database solutions.

\end{abstract}

\section{Introduction}\label{sec:intro}

Big Data analytics is a major topic in contemporary data management and machine learning research and practice. Many platforms, e.g., OptiML~\cite{optiml}, GraphLab~\cite{graphlab,graphlab-disk,graphlab-distributed}, SystemML~\cite{systemml}, Vowpal Wabbit~\cite{vowpal-wabbit}, SimSQL~\cite{simsql}, GLADE~\cite{glade:sigmod} and libraries, e.g., MADlib \cite{mad-skills,madlib}, Bismarck~\cite{bismarck}, MLlib~\cite{mllib}, Mahout\footnote{\url{https://mahout.apache.org}}, have been proposed to provide support for distributed/parallel statistical analytics. We can categorize these solutions into general frameworks with machine learning support -- Mahout, Spark's MLLib, GraphLab -- dedicated machine learning systems -- SystemML, SimSQL, OptiML, Vowpal Wabbit -- and frameworks within databases---MADlib, Bismarck, GLADE. In this paper, we focus on the last category---frameworks for in-databse analytics. A common assumption across all these systems is that the number of model parameters or features is small enough to fit in memory. This is made explicit by the representation of the model as an in-memory array data structure. However, due to the explosive growth in data acquisition and the wide adoption of analytics methods, the current trend is to devise models with an ever-increasing number of features---\textit{Big Models}. A report on industrial machine learning\footnote{\url{http://www.kdnuggets.com/2014/08/sibyl-google-system-large-scale-machine-learning.html}} cites models with 100 billion features (800 GB in size) as early as 2012. Scientific applications such as ab initio nuclear structure calculations also generate extremely large models with billions of features~\cite{ab-initio}. While these are extreme cases, there are many realistic applications that require Big Models and are forced to limit the number of features they consider because of insufficient memory resources. We provide two such examples in the following.

\textbf{Example 1: Recommender systems.}
A class of analytics models with highly-dimensional feature space are the ones in which the dimensionality grows with the number of observations. Low-rank matrix factorization (LMF) is a typical example. In LMF, the observations are a set of cells in a sparse $m \times n$ matrix $M$. The non-empty cells represent the users' ratings of items in a certain category -- such as songs or movies -- with each row corresponding to a user and each column to an item. In general, every row is sparse since a typical user rates only a very limited set of items. Each column is also sparse since only a restricted number of users rate an item. LMF seeks to decompose matrix $M$ into the product of two dense low-rank matrices $L$ and $R$ with dimensions $m \times r$ and $r \times n$, respectively, where $r$ is the rank. The prediction accuracy increases with $r$. The LMF features are matrices $L$ and $R$ which grow with the number of users $m$ and the number of items $n$, respectively, and the rank $r$. LMF is heavily used in recommender systems, e.g., Netflix, Pandora, Spotify. For example, Spotify applies LMF for 24 million users and 20 million songs\footnote{\url{http://www.slideshare.net/MrChrisJohnson/algorithmic-music-recommendations-at-spotify}}, which leads to 4.4 billion features at a relatively small rank of 100.

\textbf{Example 2: Topic modeling.}
n-grams are a common feature vector in topic modeling. They are extracted by considering sequences of 1-word tokens (unigrams), 2-word tokens (bigrams), up to n-word tokens (n-grams) from a fixed dictionary. The feature vector consists of the union of unigrams, bigrams, $\dots$, n-grams. Several analytics models can be applied over this feature space, including latent Dirichlet allocation, logistic regression, and support vector machines. For the English Wikipedia corpus, a feature vector with 25 billion unigrams and 218 billion bigrams can be constructed~\cite{Lee:STRADS}. A similar scale can be observed in genomics where topic modeling is applied to genome sequence analysis.

\textbf{Existing solutions.}
The standard model representation across all the Big Data analytics systems we are aware of -- in-database or not -- is a memory-resident container data structure, e.g., \texttt{vector} or \texttt{map}. Depending on the parallelization strategy, there can be one or more model instances in the system at the same time. Hogwild!~\cite{nolock-igd} uses a single non-synchronized instance, while averaging techniques~\cite{igd-average-gradient} replicate the model for each execution thread. At the scale of models introduced above, a memory-resident solution incurs prohibitive cost---if it is feasible at all. In reality, in-database analytics frameworks cannot handle much smaller models. For example, MADlib and Bismarck are built using the UDF-UDA\footnote{\url{http://www.postgresql.org/docs/current/static/xaggr.html}} functionality available in PostgreSQL. The model is stored as an array attribute in a single-column table. PostgreSQL imposes a hard constraint of 1 GB for the size of an attribute, effectively limiting the model size. High performance computing (HPC) libraries for efficient sparse linear algebra such as Intel MKL\footnote{\url{https://software.intel.com/en-us/intel-mkl}}, Trilinos\footnote{\url{https://trilinos.org/}}, CUSPARSE\footnote{\url{https://developer.nvidia.com/cusparse}}, and CUSP\footnote{\url{https://github.com/cusplibrary/cusplibrary}} are optimized exclusively for in-memory processing, effectively requiring that both the training dataset and the model fit in memory simultaneously.

Two approaches are possible to cope with insufficient memory---secondary storage processing and distributed memory processing. In secondary storage processing, the model is split into partitions large enough to fit in memory and the goal is to optimize the access pattern in order to minimize the number of references to secondary storage. This principle applies between any two layers of the storage hierarchy---cache and memory, memory and disk (or SSD), and texture memory and global memory of a GPU. While we are not aware of any secondary storage solution for data analytics, there have been several attempts to optimize the memory access pattern of the SpMV kernel. However, they are targeted at minimizing the number of cache misses~\cite{phi-spmv,analytics-spmv,ssd-spmv} or the number of accesses to the GPU global memory~\cite{spmv-vldb}---not the number of disk accesses.

In distributed memory processing, the Big Model is partitioned across several machines, with each machine storing a sufficiently small model partition that fits in its local memory. Since remote model access requires data transfer, the objective in distributed processing is to minimize the communication between machines. This cannot be easily achieved for the SpMV kernel due to the non-clustered access pattern. Distributed Big Data analytics systems built around the Map-Reduce computing paradigm and its generalizations, e.g., Hadoop and Spark, require several rounds of repartitioning and aggregation due to their restrictive communication pattern. To the best of our knowledge, Parameter Server~\cite{parameter-server} is the only distributed memory analytics system capable of handling Big Models directly. In Parameter Server, the Big Model can be transfered and replicated across servers. Whenever a model entry is accessed, a copy is transferred over the network and replicated locally. Modifications to the model are pushed back to the servers asynchronously. The communication has to be implemented explicitly and optimized accordingly. While Parameter Server supports Big Models, it does so at the cost of a significant investment in hardware and with considerable network traffic. Our focus is on cost-effective single node solutions.

\textbf{Approach \& contributions.}
In this paper, we propose an in-database solution for Big Model analytics. The main idea is to offload the model to secondary storage and leverage database techniques for efficient training. The model is represented as a table rather than as an array attribute. This distinction in model representation changes fundamentally how in-database analytics tasks are carried out. We identify \textit{dot-product} as the most critical operation affected by the change in model representation. Our central contribution is the first \textit{dot-product join physical database operator} optimized to execute secondary storage array-relation dot-products effectively. Dot-product join is a constrained instance of the SpMV kernel~\cite{ssd-spmv} which is widely-studied across many computing areas, including HPC, architecture, and compilers. The paramount challenge we have to address is how to optimally schedule access to the dense relation -- buffer management -- based on the non-contiguous feature entries in the sparse arrays. The goal is to minimize the overall number of secondary storage accesses across all the sparse arrays. We prove that this problem is NP-hard and propose a practical solution characterized by two technical contributions. The first contribution is to handle the sparse arrays in \textit{batches with variable size}---determined dynamically at runtime. The second contribution is a \textit{reordering strategy} for the arrays such that accesses to co-located entries in the dense relation can be shared.

Our specific contributions can be summarized as follows:
\begin{compactitem}
\item We investigate the Big Model analytics problem and identify dot-product as the critical operation in training generalized linear models (Section~\ref{sec:problem}). We also establish a direct correspondence with the well-studied sparse matrix vector (SpMV) multiplication problem.
\item We present several alternatives for implementing dot-product in a relational database and discuss their relative advantages and drawbacks (Section~\ref{sec:baseline}).
\item We design the first array-relation dot-product join database operator targeted at secondary storage (Section~\ref{sec:dot-product}).
\item We prove that identifying the optimal access schedule for the dot-product join operator is NP-hard (Section~\ref{ssec:reordering:np-hard}) and introduce two optimizations -- dynamic batch processing and reordering -- to make the operator practical.
\item We devise three batch reordering heuristics -- LSH, Radix, and K-center (Section~\ref{sec:reordering}) -- inspired from optimizations to the SpMV kernel and evaluate them thoroughly.
\item We show how the dot-product join operator is integrated in the gradient descent optimization pipeline for training generalized linear models (Section~\ref{sec:dp-gd}).
\item We execute an extensive set of experiments that evaluate each sub-component of the operator and compare our overall solution with alternative dot-product implementations over synthetic and real data (Section~\ref{sec:experiments}). The results show that dot-product join achieves an order of magnitude reduction in execution time over alternative in-database solutions.
\end{compactitem}

\eat{
\RED{Rewrite this to make it clear. This part is essential for understanding.

In order to support big models for in-database analytics, we propose to store statistical model as full relations as oppose to array fields or blobs in order to support big model in-database analytics. This distinct approach we take makes large model learning possible which was impossible for existing in-database solutions. The implication of having model as a relation/table is that we can allow for arbitrary large models and the solution would still work as the model size grows. The change in data representation of model, however, fundamentally changes how the statistical analysis should be carried out for in-database analytics. It invalidates the UDF approach since the model can not be easily accessed by UDF as a whole when the model is a relation instead of a field/column of one tuple/row. How to support statistical analysis when representing the model as a relation/table is the subject we study in this paper.
}

\begin{figure}[htbp]
\begin{center}
\includegraphics[width=0.4\textwidth]{dp}
\label{fig:dp}
\caption{Dot product Operator}
\end{center}
\end{figure}

\RED{Polish this paragraph.

Before diving into designing a solution for specific models, we first studied what are the fundamental operation that all popular models share and we found out that the dot product (between observation and model) is the single operator that is needed for virtually all methods. Thus, instead of implementing a solution tightly coupled end-to-end solution for specific model, we step back and study how can we support the dot product operator efficiently within a database in the context of big model analytics. There are a few requirements for such dot product. First, it has to be able to work with out-of-core models, this is a fundamental requirement in order to support big models training in database. Second, it should produce result in a non-blocking fashion. Non-blocking feature is required by online learning methods, such as stochastic gradient descent which is de factor numerical methods used in practice. Third and last, the dot product should be efficient and parallelisable in order to scale to large datasets.
}

In designing the dot product, first attempt is given to pure SQL solution. Though complicated, we manage to come up with a SQL only solution for the dot product as well as gradient descent. In the pure SQL solution, both the observation and the model are represented as relations. It suffers low efficiency due to too much self-joins for reconstructing observations.

In the proposed dot product operator (Figure~\ref{fig:dp}), we represent observations as sparse array to avoid self-joins. Even though the overall model is huge, it is almost always the case that a single observation is sparse and small. For example in the LMF example, one user usually only has listened a small subset of songs, e.g. a few thousands; same thing holds for the text analysis, while the whole corpus can end up having enormous feature dimensions, distinct features of one document is normally not that huge. Therefore, having observations as arrays dodges a lot unnecessary self-join. 

\RED{More details here.
Having observations as arrays and model as relation, the dot product operator joins a \textit{relation} and a \textit{array} on corresponding indexes. To the best of our knowledge, the ``join'' between relation and array has never been studied in the literature. The proposed dot product operator works well with sparse observation and huge model size. We further propose locality sensitive hashing techniques to optimize the cache access and disk access of two dot product solutions which gives an order of magnitude speed-up compared with naive solution in pure SQL.
}

The proposed dot product meets all the requirements mentioned before. By keeping only the working set in memory, it is able to work with models of which the most part does not fit in memory. The proposed dot product uses a index-join-like manner to produce result in a non-blocking manner which allows online learning. Last, we sharding techniques are used in order to parallize dot product operator with multi-threads. In table ~\ref{fig:methodcomp}, we compare applicability of our proposed dot product with existing solutions for in-database analytics.

\RED{Make this a simple LATEX table without any background. It is easier to read.}

\begin{figure}[htbp]
\begin{center}
\includegraphics[width=0.4\textwidth]{table}
\label{fig:methodcomp}
\caption{Applicability of Different Methods}
\end{center}
\end{figure}

\RED{These paragraphs seem to be out of place. Try to fix them.

Inspired by the new challenges brought by Big Models and the lack of feasible solution for in-database analytics, we study the in-database analytics solution for Big Models. Specifically, we study how to train statistical models with out-of-core model size within a single node database. We identify that the dot product operation is the fundamental operation shared by a lot machine learning tasks and that the lack to support out-of-core dot product is the key reason that hinders big model training for in-database analytics. We design and implement a efficient parallel non-blocking dot product operator that works with out-of-core operands.  To validate the applicability of the dot product operator, we further apply it to the stochastic gradient descent to evaluate the efficiency and usefulness. Also note that we focus on the real cases where one single observation is small enough to fit in memory while the overall model is too large to fit in the memory of a single machine.

\textbf{Contributions.} To the best of our knowledge, this is the first paper that studies the training of out-of-core big models in an in-database analytics context. We propose to store models as relations to account for increasing model size and we propose three novel methods to support in-database analytics for the new model representation. We abstract out the most important operation for a statistic model solver---dot product. Staring from data representation for observations and model, we propose to support dot product operator as native operator within database to facilitate large-scale in-database analytics. As shown in table~\ref{fig:methodcomp}, we compare the applicability of existing solutions with our proposed solutions for big model gradient descent methods. Note that our proposed dot product operator works beyond gradient descent. Since gradient descent is de facto methods for solving statistic tasks, we evaluate the dot product operator upon it. Our specific contribution can be summarized as follows:
\begin{itemize} [noitemsep,nolistsep]
\item We propose to store the model as relations as oppose to array fields to allow in-database big model analytics.
\item We design and implement a non-blocking dot product operator which works with out-of-core models.
\item We uses locality sensitive hashing as optimization techniques to facilitate the proposed dot product operator.
\item We study the effectiveness of proposed dot product by implementing parallel gradient descent solver within a parallel database. 
\item We thoroughly evaluate proposed solutions by experiments.
\end{itemize}
}

\textbf{Outline.} Outline to be added. 
}

\section{Preliminaries}\label{sec:problem}

In this section, we provide a brief introduction to gradient descent as a general method to train a wide class of analytics models. Then, we give two concrete examples -- logistic regression and low-rank matrix factorization -- that illustrate how gradient descent optimization works. From these examples, we identify \textit{vector dot-product} as the primary operation in gradient descent optimization. We argue that existing in-database solutions cannot handle Big Model analytics because they do not support secondary storage dot-product. We provide a formal statement of the research problem studied in this paper and conclude with a discussion on the relationship with SpMV.

\subsection{Gradient Descent Optimization}\label{ssec:problem:grad-descent}

Consider the following optimization problem with a linearly separable objective function:
\begin{equation}\label{eq:optim-form}
\Lambda(\vec{w}) = \textit{min}_{w \in \mathbb{R}^{d}} \sum_{i=1}^{N} f\left(\vec{w}, \vec{x_{i}}; y_{i}\right)
\end{equation}
in which a $d$-dimensional vector $\vec{w}$, $d \geq 1$, known as the model, has to be found such that the objective function is minimized. The constants $\vec{x_{i}}$ and $y_{i}$, $1 \leq i \leq N$, correspond to the feature vector of the $\text{i}^{\text{th}}$ data example and its scalar label, respectively.

Gradient descent represents -- by far -- the most popular method to solve the class of optimization problems given in Eq.(\ref{eq:optim-form}). Gradient descent is an iterative optimization algorithm that starts from an arbitrary model $\vec{w}^{(0)}$ and computes new models $\vec{w}^{(k+1)}$, such that the objective function, a.k.a., the loss, decreases at every step, i.e., $f(w^{(k+1)}) < f(w^{(k)})$. The new models $\vec{w}^{(k+1)}$ are determined by moving along the opposite $\Lambda$ gradient direction. Formally, the $\Lambda$ gradient is a vector consisting of entries given by the partial derivative with respect to each dimension, i.e., $\nabla \Lambda(\vec{w}) = \left[\frac{\partial\Lambda(\vec{w})}{\partial{w_{1}}}, \dots, \frac{\partial\Lambda(\vec{w})}{\partial{w_{d}}}\right]$. The length of the move at a given iteration is known as the step size---denoted by $\alpha^{(k)}$. With these, we can write the recursive equation characterizing the gradient descent method:
\begin{equation}\label{eq:grad-step}
\vec{w}^{(k+1)} = \vec{w}^{(k)} - \alpha^{(k)} \nabla \Lambda\left(\vec{w}^{(k)}\right)
\end{equation}
In batch gradient descent (BGD), the update equation is applied as is. This requires the exact computation of the gradient---over the entire training dataset. To increase the number of steps taken in one iteration, stochastic gradient descent (SGD) estimates the $\Lambda$ gradient from a subset of the training dataset. This allows for multiple steps to be taken in one sequential scan over the training dataset---as opposed to a single step in BGD. \eat{Notice that the model update is only applied to indexes with non-zero  gradient which correspond to the non-zero indexes in the training example. In order to achieve convergence, each of the subsets has to be a random sample.}

In the following, we provide two illustrative examples that show how gradient descent works for two popular analytics tasks---logistic regression (LR) and low-rank matrix factorization (LMF).

\textbf{Logistic regression.}
The LR objective function is:
\begin{equation}\label{eq:lr}
\Lambda_{\textit{LR}}(\vec{w}) = \sum_{i=1}^{N}{\log\left(1+e^{-y_{i} \vec{w} \cdot \vec{x_{i}}}\right)} 
\end{equation}
The model $\vec{w}$ that minimizes the cumulative log-likelihood across all the training examples is the solution. The key operation in this formula is the dot-product between vectors $\vec{w}$ and $\vec{x_{i}}$, i.e., $\vec{w} \cdot \vec{x_{i}}$. This dot-product also appears in each component of the gradient:
\begin{equation}\label{eq:lr-gradient}
\frac{\partial\Lambda_{\textit{LR}}(\vec{w})} {\partial{w_{i}}} = \sum_{i=1}^{N}{\left(-y_{i} \frac{e^{-y_{i} \vec{w} \cdot \vec{x}_{i}}}{1+e^{-y_{i} \vec{w} \cdot \vec{x}_{i}}} \right) \vec{x_{i}}}
\end{equation}

\textbf{Low-rank matrix factorization.}
The general form of the LMF objective function is:
\begin{equation}\label{eq:lmf}
\Lambda_{\textit{LMF}}(L,R) = \sum_{(i,j)\in M}{ \frac{1}{2} \left( {\vec{L}_{i}^{T}} \cdot \vec{R}_{j} - M_{ij} \right) ^{2}}
\end{equation}
Two low-rank matrices $L_{n \times k}$ and $R_{k \times m}$, i.e., the model, have to be found such that the sum of the cell-wise least-squares difference between the data matrix $M_{n \times m}$ and the product $L \cdot R$ is minimized. $k$ is the rank of matrices $L$ and $R$. The LMF gradient as a function of row $i'$ in matrix $\vec{L}$ is shown in the following formula:
\begin{equation}\label{eq:lmf-gradient}
\frac{\partial\Lambda_{\textit{LMF}}(L,R)} {\partial \vec{L}_{i'}} = \sum_{(i',j)\in M}{ \left( {\vec{L}_{i'}^{T}} \cdot \vec{R}_{j} - M_{i'j} \right) \vec{R}_{j}^{T}}
\end{equation}
There is such a dot-product between row $\vec{L}_{i'}$ and each column $\vec{R}_{j}$, i.e., ${\vec{L}_{i'}^{T}} \cdot \vec{R}_{j}$. However, only those dot-products for which there are non-zero cells $(i',j)$ in matrix $M$ have to be computed. A similar gradient formula can be written for column $j'$ in $\vec{R}$. 

From these examples, we identify dot-product as the essential operation in gradient descent optimization. With almost no exception, dot-product appears across all analytics tasks. In LR and LMF, a dot-product has to be computed between the model $\vec{w}$ and each example $\vec{x}_{i}$ in the training dataset---at each iteration.

\subsection{Vector Dot-Product}\label{ssec:problem:dot-product}

\begin{minipage}{.6\textwidth}
Formally, the dot-product of two $d$-dimensional vectors $\vec{u}$ and $\vec{v}$ is a scalar value computed as:
\begin{equation}\label{eq:dot-product-def}
\vec{u} \cdot \vec{v} = \sum_{i=1}^{d} {u_{i} v_{i}}
\end{equation}
This translates into a simple algorithm that iterates over the two vectors synchronously, computes the component-wise product, and adds it to a running sum (Algorithm~\ref{alg:dp-form}). In most cases, the $\textit{multiply}$ function is a simple multiplication. However, in the case of LMF, where $u_{i}$ and $v_{j}$ are vectors themselves, $\textit{multiply}$ is itself a dot-product.
\end{minipage}\hfill
\begin{minipage}{.35\textwidth}
	\centering
\begin{algorithm}[H]
\caption{Dot-Product ($\vec{u}_{1:d}$, $\vec{v}_{1:d}$)}
\label{alg:dp-form}

\algsetup{linenodelimiter=.}

\begin{algorithmic}[1]

\ENSURE $\vec{u} \cdot \vec{v}$

\STATE $\textit{dpSum} \leftarrow 0$
\FOR{$i=1$ \textbf{to} $d$}
	\STATE $\textit{dp} \leftarrow \textit{multiply} (u_{i}, v_{i})$
	\STATE $\textit{dpSum} \leftarrow \textit{dpSum} + \textit{dp}$
\ENDFOR
\RETURN {$\textit{dpSum}$}

\end{algorithmic}
\end{algorithm}
\end{minipage}\hfill

\subsection{Problem Statement \& Challenges}\label{ssec:problem:statement}

In this paper, we focus on \textit{Big Model dot-product}. Several aspects make this particular form of dot-product an interesting and challenging problem at the same time. First, the vector corresponding to the model -- say $\vec{v}$ -- cannot fit in the available memory. Second, the vector corresponding to a training example -- say $\vec{u}$ -- is sparse and the number of non-zero entries is a very small fraction from the dimensionality of the model. In these conditions, Algorithm \textit{Dot-Product} becomes highly inefficient because it has to iterate over the complete model even though only a small number of entries contribute to the result. The third challenge is imposed by the requirement to support Big Model dot-product inside a relational database where there is no notion of order---the relational model is unordered. The only solution to implement Algorithm \textit{Dot-Product} inside a database is through the UDF-UDA extensions which process a single tuple at a time. The operands $\vec{u}$ and $\vec{v}$ are declared as table attributes having \texttt{ARRAY} type, while the dot-product is implemented as a UDF. For reference, the existing in-database analytics alternatives treat the model as an in-memory UDA state variable~\cite{madlib,bismarck,igd-glade}. This is impossible in Big Model analytics.

\begin{figure}[htbp]
\begin{center}
\includegraphics[width=0.58\textwidth]{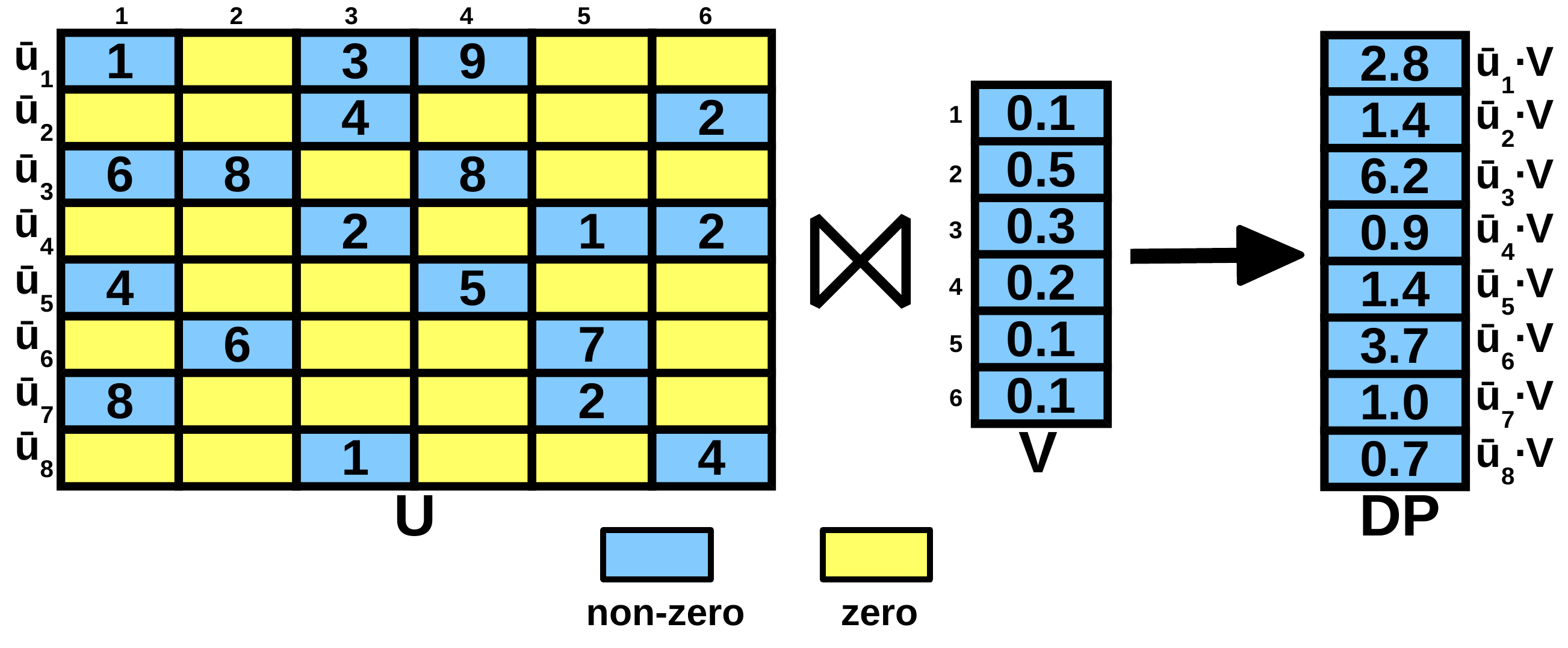}
\caption{The dot-product operator.}\label{fig:dp-operator}
\end{center}
\end{figure}

The research problem we study in this paper is \textit{how to design a database operator for Big Model dot-product}. Given a dataset consisting of sparse training examples (set of vectors \texttt{U} = $\{\vec{u}_{1}, \dots, \vec{u}_{N}\}$ in Figure~\ref{fig:dp-operator}) and a dense model (vector \texttt{V} in Figure~\ref{fig:dp-operator}), this operator computes the set of dot-products between the examples and the model (vector \texttt{DP} in Figure~\ref{fig:dp-operator}) optimally. An entry $\texttt{DP}_{i}$ corresponds to the dot-product $\vec{u}_{i}\cdot V$. Following the notation in Section~\ref{ssec:problem:grad-descent}, $\vec{u}_{i}$ corresponds to $\vec{x}_{i}$ and vector \texttt{V} corresponds to $\vec{w}$. Since the model \texttt{V} cannot fit in-memory, optimality is measured by the total number of secondary storage accesses. This is a good indicator for the execution time, given the simple computation required in dot-product. In addition to the main objective, we include a functional constraint in the design---\textit{results have to be generated in a non-blocking fashion}. As soon as the dot-product $\texttt{DP}_{i} = \vec{u}_{i}\cdot V$ corresponding to a training example $\vec{u}_{i}$ is computed, it is made available to the calling operator/application. This requirement is essential to support SGD---the gradient descent solution used the most in practice.

\textbf{Relationship between Big Model dot-product and SpMV.}
If we ignore the non-blocking functional constraint, the result vector \texttt{DP} is the product of sparse matrix \texttt{U} and vector \texttt{V}, i.e., $\texttt{DP} = \texttt{U} \cdot \texttt{V}$. This is the standard SpMV kernel which is notorious for sustaining low fractions of peak performance and for which there is a plethora of algorithms proposed in the literature. However, none of these algorithms considers the case when vector \texttt{V} does not fit in memory. They focus instead on the higher levels of the storage hierarchy, i.e., memory--cache~\cite{analytics-spmv} and global memory--texture memory in GPU~\cite{spmv-vldb}, respectively. While some of the solutions can be extended to disk-based SpMV, they are applicable only to BGD optimization which is less often used in practice. In SGD, \texttt{V} is updated after computing the dot-product with every vector $\vec{u}_{i}$, i.e., row of matrix \texttt{U}. This renders all the SpMV algorithms inapplicable to Big Model dot-product. We propose novel algorithms for this constrained version of SpMV. Since this is a special instance of SpMV, the proposed solution is applicable to the general SpMV problem.

\eat{
\subsection{Challenges In Large-Scale Dot-Product}\label{ssec:problem:challenges}

In current in-database analytics solution (~\cite{bismarck, madlib, glade:sigmod}), the operands of dot-product in algorithm (~\ref{alg:dp-form}), denoted as $\vec{u}$ and $\vec{v}$, are stored as \texttt{ARRAY} type and the dot-product is implemented as an user defined function (UDF). However, such UDF-based dot-product only works when the dot-product operands are small enough to stored as an \texttt{ARRAY} or fit in memory. In current big model analytics, model size is increasingly large either because the model grows proportional with the number of users, or because it involves very high dimensional sparse feature vectors. This requires dot-product between high-dimensional vectors. Current solutions fail to support such dot-product operation for two main reasons. First, the dot-product operands can not be stored as \texttt{ARRAY}s since RDBMS typically has limits on attribute size. For examples in Postgres, the limit of an attribute is 1GB which means dot-product operands can not be stored as an array as long as they are larger than 1GB. Even there is no such limits on attribute size, the second problem of current UDF dot-product solution is that the dot-product is limited to in-memory operands since UDF only works with in-memory objects. Supporting large-scale dot product is the key foundation in supporting big model in-database anlytics.

Even though in almost all the cases high-dimensional feature vectors are sparse, none of the existing solution take advantage of this feature.

Specifically, we study how to support the dot-product presented in algorithm~\ref{alg:dp-form} when both operands $\vec{u}$ and $\vec{v}$ are in high-dimensional space and when one of the operand ($\vec{u}$) is sparse and the other ($\vec{v}$) is dense. There are a few requirements of the dot product. First, the dot product is expected to produce result in a non-blocking manner in order to support both batch learning and online learning (batch and stochastic gradient descent in case of gradient descent). Second, the dot product operator should be able to be parallized so that the multiple dot-products can happen as the same time. Third, the dot-product operator should be efficient, incurs little overhead, and works with both small and large operands.
}

\section{Database Dot-Product}\label{sec:baseline}

In this section, we present two database solutions to the Big Model dot-product problem. First, we introduce a pure relational solution that treats the operands \texttt{U} and \texttt{V} as standard relations. Second, we give an \texttt{ARRAY}-relation solution that represents each of the vectors $\vec{u}_{i}$ in \texttt{U} as an \texttt{ARRAY} data type---similar to existing in-database frameworks such as MADlib and Bismarck.

\subsection{Relational Dot-Product}\label{ssec:baseline:relational}

The relational solution represents \texttt{U} and \texttt{V} as relations and uses standard relational algebra operators, e.g., join, group-by, and aggregation -- and their SQL equivalent -- to compute the dot-product. In order to represent vectors $\vec{u}_{i}\in \texttt{U}$ in relational format, we create a tuple for each non-zero dimension. The attributes of the tuple include the index and the actual value in the vector. Moreover, since the components of a vector are represented as different tuples, we have to add another attribute identifying the vector, i.e., \textit{tid}. The schema for the dense vector \texttt{V} is obtained following the same procedure. There is no need for a vector/tuple identifier \textit{tid}, though, since \texttt{V} contains a single tuple, i.e., the model:
\begin{sql}
U(index INTEGER,value NUMERIC,tid INTEGER)
V(index INTEGER,value NUMERIC)
\end{sql}
Based on this representation, relation \texttt{U} corresponding to Figure~\ref{fig:dp-operator} contains 19 tuples. The tuples for $\vec{u}_{1}$ are $(1,1,1),$ $(3,3,1), (4,9,1)$. Relation \texttt{V} contains a tuple for each index. An alternative representation is a wide relation with an attribute corresponding to each dimension/index. However, tables with such a large number of attributes are not supported by any of the modern relational databases. The above representation is the standard procedure to map ordered vectors into non-ordered relations at the expense of data redundancy---the tuple identifier for sparse vectors and the index for dense vectors.

The dot-product vector \texttt{DP} is computed in the relational data representation with the following standard \textit{join group-by aggregate} SQL statement: 
\begin{sql}
SELECT U.tid, SUM(U.value*V.value)
FROM U, V
WHERE U.index=V.index
GROUP BY U.tid
\end{sql}
This relational solution supports vectors of any size. However, it breaks the original vector representation since vectors are treated as tuples. Moreover, the relational solution cannot produce the result dot-product $\texttt{DP}$ in a non-blocking manner due to the blocking nature of each operator in the query. This is not acceptable for SGD.

\subsection{Array-Relation Join}\label{ssec:baseline:array-relation}

In the \texttt{ARRAY}-relation join solution, the sparse structure of \texttt{U} is preserved by representing each vector $\vec{u}_{i}$ with two \texttt{ARRAY} attributes, one for the non-zero index and another for the value:
\begin{sql}
U(index INTEGER[],value NUMERIC[],tid INTEGER)
\end{sql}
The vector identifier \textit{tid} is still required. This representation is the database equivalent of the coordinate representation~\cite{spmv-vldb} for sparse matrices and is possible only in database servers with type extension support, e.g., PostgreSQL\footnote{\url{http://www.postgresql.org/docs/current/static/xtypes.html}}. Vector \texttt{V} is decomposed as in the relational solution. The SQL query to compute dot-product \texttt{DP} in PostgreSQL is:
\begin{sql}
SELECT U.tid, SUM(U.value[idx(U.index,V.index)]*V.value)
FROM U, V
WHERE V.index = ANY(U.index)
GROUP BY U.tid
\end{sql}
There are several differences with respect to the relational solution. The equi-join predicate in \texttt{WHERE} is replaced with an inclusion predicate \texttt{V.index IN U.index[]} testing the occurrence of a scalar value in an array. This predicate can be evaluated efficiently with an index on \texttt{V.index}. The expression in the \texttt{SUM} aggregate contains an indirection based on \texttt{V.index} which requires specialized support. While the size of the join is the same, the width of the tuples in \texttt{ARRAY}-relation join is considerably larger -- the size of the intermediate join result table is the size of U multiplied by the average number of non-zero entries across the vectors $\vec{u}_{i}$ -- since the entire \texttt{ARRAY} attributes are replicated for every result tuple. This is necessary because the arrays are used in the \texttt{SUM}. Finally, the \texttt{ARRAY}-relation join remains blocking, thus, it is still not applicable to SGD.

\eat{
A non-blocking dot-product solution that preserves the vector structure can be implemented outside the database as stored procedure. The data representation is exactly the same as \texttt{ARRAY}-relation solution and the \texttt{ARRAY}-relation join is done by looping through each dimension of the \texttt{ARRAY} types and calling SQL selection for each dimension in the stored procedure. The detailed solution is shown in algorithm (\ref{alg:udf-join}).
Notice that there is a selection for in line (3) (algorithm \ref{alg:udf-join}). In order to avoid full disk scan on table \texttt{V}, index can be built on attribute \textit{index} of table \texttt{V}. 
In Postgres, there is a technique called index-only scan in which the rest of attributes in tuple are attached to the index structure so that no extra look up have to be done after looking up on the index. Index-only scan can further speed up the join by avoiding an extra disk seek for each join key lookup.
}

\eat{
The pure relational solution is inefficient as well as not applicable. On the one hand, extra storage has to be used. The introduced attribute, \textit{tid}, is duplicated as many times as the dimensionality of the \texttt{ARRAY}s in the original \texttt{U} table. On the other hand, the dot product result can not be produced in a non-blocking manner due to the group-by operator. Thus the pure SQL solution only works for batch learning and fails to satisfy our goal (Section~\ref{ssec:problem:statement}) in supporting online learning. 

The \texttt{ARRAY}-relation solution is currently not supported in any of the existing databases due to the \texttt{IN} operator in the where clause. It suffers same drawback as pure relational solution in that dot-product result can not be generated in a non-blocking way.

Stored procedure solution is the only solution that produces non-blocking dot-product result and keeps the \texttt{ARRAY} representation. However, the stored procedure solution compute the dot-product outside the database and suffers heavy overhead in joining same key (index) over and over again.

Since none of the above solutions fit our goal, we propose an efficient dot-product join operator that preserves the \texttt{ARRAY} structure of table \texttt{U} and generates dot-product result in a non-blocking fashion.
}

\section{Dot-Product Join Operator}\label{sec:dot-product}

In this section, we present the dot-product join operator for Big Model dot-product computation. Essentially, dot-product join combines the advantages of the existing database solutions. It is an in-database operator over an \texttt{ARRAY}-relation data representation that computes the result vector without blocking and without generating massive intermediate results. In this section, we present the requirements, challenges, and design decisions behind the dot-product join operator, while a thorough evaluation is given in Section~\ref{sec:experiments}.

\subsection{Requirements \& Challenges}\label{ssec:dot-product:req-chall}

The dot-product join operator takes as input the operands \texttt{U} and \texttt{V} and generates the dot-product vector \texttt{DP} (Figure~\ref{fig:dp-operator}). \texttt{U} follows the \texttt{ARRAY} representation in which the index and value are stored as sparse \texttt{ARRAY}-type attributes. Only the non-zero entries are materialized. \texttt{V} is stored in range-based partitioned relational format---the tuples \texttt{(index,value)} are partitioned into \textit{pages} with fixed size and a page contains tuples with consecutive \texttt{index} values. The overall number of pages in \texttt{V} is $p_{V}$. For example, with a page size of $2$, vector \texttt{V} in Figure~\ref{fig:dp-operator} is partitioned into $3$ pages. Page $1$ contains indexes $\{1,2\}$, page $2$ indexes $\{3,4\}$, and page $3$ indexes $\{5,6\}$. The memory budget available to the dot-product join operator for storing \texttt{V} is $M$ pages, which is smaller than $p_{V}$---this is a fundamental constraint.

Without loss of generality, we assume that each vector $\vec{u}_{i}\in U$ accesses at most $M$ pages from \texttt{V}. This allows for atomic computation of entries in vector \texttt{DP}, i.e., all the data required for the computation of the entry $\texttt{DP}_{i}$ are memory-resident at a given time instant. Moreover, each vector $\vec{u}_{i}$ has to be accessed only once in order to compute its contribution to \texttt{DP}---\texttt{U} can be processed as a stream. As a result, the cost of Big model dot-product computation is entirely determined by the number of secondary storage page accesses for \texttt{V}---the execution of Algorithm \textit{Dot-Product} is considerably faster than accessing a page from secondary storage. Thus, the main challenge faced by the dot-product join operator is \textit{minimizing the number of secondary storage page accesses} by judiciously using the memory budget $M$ for caching pages in \texttt{V}.

A vector $\vec{u}_{i}\in U$ contains several non-zero entries. Each of these entries requires a request to retrieve the value at the corresponding index in \texttt{V}. Thus, the number of requests can become a significant bottleneck even when the requested index is memory-resident. \textit{Reducing the number of requests} for entries in \texttt{V} is a secondary challenge that has to be addressed by the dot-product join operator.

\begin{ex}
In order to illustrate these challenges, we extend upon the example in Figure~\ref{fig:dp-operator}. The total number of requests to entries in \texttt{V} is $19$---$3$ for $\vec{u}_{1}$, $2$ for $\vec{u}_{2}$, and so on. The corresponding number of page accesses to \texttt{V} when we iterate over \texttt{U} is $16$---$2$ pages are accessed by each vector $\vec{u}_{i}$. This is also the number of requests when we group together requests to the same page. With a memory budget $M=2$ and LRU cache replacement policy, the number of pages accessed from secondary storage reduces to $8$. However, the number of requests remains as before, i.e., $16$.
\end{ex}

\begin{figure*}
\begin{center}
\includegraphics[width=1.0\textwidth]{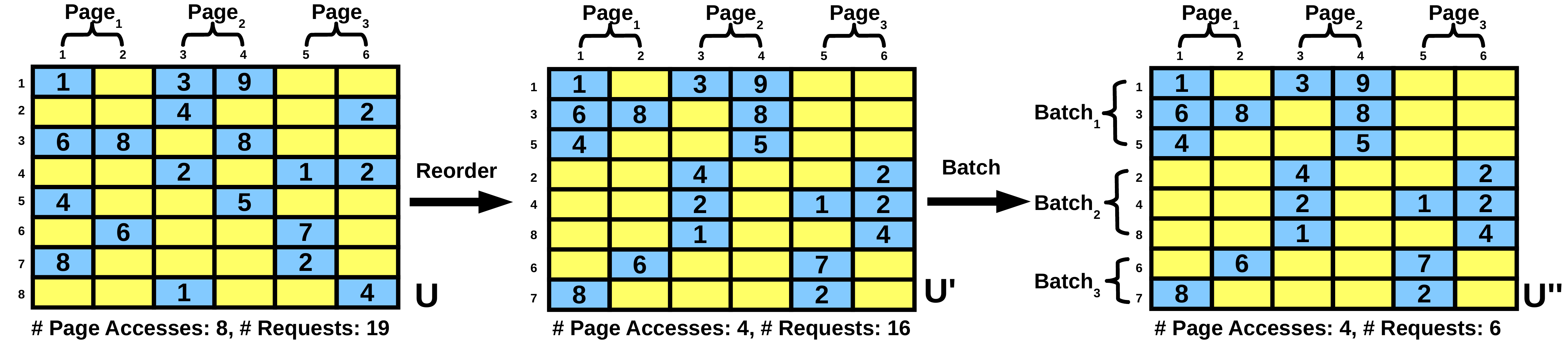}
\caption{Optimization strategies for the dot-product join operator.}\label{fig:dp-flow}
\end{center}
\end{figure*}

\subsection{High-Level Approach}\label{ssec:dot-product:high-level}

\eat{
As described in Section \ref{sec:baseline}, the dot-product join operator joins an \texttt{ARRAY} table \texttt{U} and a relation table \texttt{V}. Table \texttt{V} represents a dense array, while table \texttt{U} stores a set of sparse arrays. The dot-product operator does a dimensional-wise join between tuples in \texttt{U} and \texttt{V} and outputs the dot-products for each tuple in \texttt{U}.

Considering the requirement of non-blocking dot-product. None of the popular existing join algorithms would apply. For example, it is impossible to apply hash join or sort merge join since they require either hashing or sorting on the join key which will break array structure and not be able to produce join result in an online fashion. Since we are keeping one \texttt{ARRAY} as a whole and one \texttt{ARRAY} contains multiple keys, hash join or sort join can not be done without decoupling the \texttt{ARRAY} into single values which goes back to what we have discussed in Section~\ref{sec:baseline}. We now discuss a novel way of designing the dot-product operator.

The proposed dot-product join operator is an operator that combines the advantages of \texttt{ARRAY}-relation join method and external stored procedure method in Section(\ref{sec:baseline}). Same as \texttt{ARRAY}-relation join, dot-product join operator is a native database operator that works with \texttt{ARRAY}-relation pairs. The order of join, however, is same as in the stored procedure solution where the dot-product operator loops through table \texttt{U} and finds joined tuples in table \texttt{V}. 
}

Algorithm \textit{Dot-Product Join Operator} contains a high-level presentation of the proposed operator for Big Model dot-product. Intuitively, we push the aggregation inside the \texttt{ARRAY}-relation join and apply a series of optimizations that address the challenges identified above---minimize the number of secondary storage accesses and the number of requests to entries in \texttt{V}. Given \texttt{U}, \texttt{V}, and memory budget $M$, \textit{Dot-Product Join Operator} iterates over the pages of \texttt{U} and executes a three-stage procedure at \textit{page-level}. The three stages are: optimization, batch execution, and dot-product computation. Optimization minimizes the number of secondary storage accesses. Batch execution reduces the number of requests to \texttt{V} entries. Dot-product computation is a direct call to \textit{Dot-Product} that computes an entry of the \textit{DP} result. Notice that generating intermediate join results and grouping are not required anymore since the entries in \texttt{V} corresponding to a vector $\vec{u}_{i}$ are produced in a contiguous batch. In the following, we give an overview of the optimization and batch execution stages.

\begin{algorithm}[htbp]
\caption{Dot-Product Join Operator}\label{alg:dp}
\algsetup{linenodelimiter=.}

\begin{algorithmic}[1]

\REQUIRE ~~\\
U (index INTEGER[],value NUMERIC[],tid INTEGER)\\
V (index INTEGER,value NUMERIC)\\
memory budget $M$
\ENSURE DP (tid INTEGER,product NUMERIC)

\FOR{\textbf{each} page $u_{p}\in U$}
	\item[\hspace*{.7cm}\textbf{\underline{OPTIMIZATION}}]
	\STATE Reorder vectors $\vec{u}_{i}$ to cluster similar vectors together\label{alg:dot-product:reorder}
	\STATE Group vectors $\vec{u}_{i}$ into batches $B_{j}$ that access at most $M$ pages from V\label{alg:dot-product:batching}

	\item[\hspace*{.7cm}\textbf{\underline{BATCH EXECUTION}}]
\FOR{\textbf{each} batch $B_{j}$}
	\STATE Collect pages $v_{p}\in V$ accessed by vectors $\vec{u}_{i}\in B_{j}$ into a set $v_{B_{j}} = \{ v_{p}\in V | \exists \vec{u}_{i}\in B_{j} \text{ that accesses } v_{p} \}$\label{alg:dot-product:batch-1}
	\STATE Request access to pages in $v_{B_{j}}$\label{alg:dot-product:batch-2}

	\item[\hspace*{1.0cm}\textbf{\underline{DOT-PRODUCT COMPUTATION}}]
\FOR{\textbf{each} vector $\vec{u}_{i}\in B_{j}$}\label{alg:dot-product:batch-3}
	\STATE dp $\leftarrow$ \textit{Dot-Product}($\vec{u}_{i}$,V)
	\STATE Append ($\vec{u}_{i}.\textit{tid}$, dp) to DP
\ENDFOR\label{alg:dot-product:batch-4}

\ENDFOR

\ENDFOR

\RETURN {DP}

\end{algorithmic}
\end{algorithm}

\subsubsection{Optimization}\label{ssec:dot-product:high-level:optim}

The naive adoption of the \texttt{ARRAY}-relation solution inside the dot-product join operator does not address the number of secondary storage accesses to pages in \texttt{V}. The database buffer manager is entirely responsible for handling secondary storage access. The standard policy to accomplish this is LRU---the least recently accessed page is evicted from memory. This can result in a significantly suboptimal behavior, as shown in Figure~\ref{fig:dp-flow}. \texttt{U} is the same as in Figure~\ref{fig:dp-operator} and $M=2$. As discussed before, the number of secondary storage page accesses is $8$.  Essentially, every vector $\vec{u}_{i}\in U$, except $\vec{u}_{7}$, requires a page replacement. This is because consecutive vectors access a different set of two pages. The net result is significant \textit{buffer or cache thrashing}---a page is read into memory only to be evicted at the subsequent request. One can argue that the access pattern in this example is the worst-case for LRU. The Big Model dot-product problem, however, exhibits this pattern pervasively because each dot-product computation is atomic. Allowing partial dot-product computation incurs a different set of obstacles and does not satisfy the non-blocking requirement. Thus, it is not a viable alternative for our problem.

$\texttt{U}'$ in Figure~\ref{fig:dp-flow} contains exactly the same vectors as \texttt{U}, reordered such that similar vectors -- vectors with non-zero entries co-located in the same partition/page of \texttt{V} -- are clustered together. The number of secondary storage accesses corresponding to $\texttt{U}'$ is only $4$. The reason for this considerable reduction is that a \texttt{V} page -- once read into memory -- is used by several vectors $\vec{u}_{i}$ that have non-zero indexes resident to the page. Essentially, vectors share access to the common pages.

\textbf{Reordering.}
The main task of the optimization stage is to \textit{identify the optimal reordering of vectors $\vec{u}_{i}\in U$ that minimizes the number of secondary storage accesses to \texttt{V}}---line~(\ref{alg:dot-product:reorder}) in Algorithm~\ref{alg:dp}. We prove that this task is NP-hard by a reduction from the minimum Hamiltonian path problem\footnote{\url{https://en.wikipedia.org/wiki/Hamiltonian_path}}. We propose three heuristic algorithms that find good reorderings in a fraction of the time---depending on the granularity at which the reordering is executed, e.g., a page, several pages, or a portion of a page. The first algorithm is an LSH\footnote{LSH stands for \textit{locality-sensitive hashing}.}-based extension to nearest neighbor---the well-known approximation to minimum Hamiltonian path. The second algorithm is partition-level radix sort. The third algorithm is standard k-center clustering applied at partition-level. We discuss these algorithms in Section~\ref{sec:reordering}.

Vector reordering is based on permuting a set of rows and columns of a sparse matrix in order to improve the performance of the SpMV kernel. Permuting is a common optimization technique used in sparse linear algebra. The most common implementation is the reverse Cuthill-McKee algorithm (RCM)~\cite{phi-spmv} which is widely used for minimizing the maximum distance between the non-zeros and the diagonal of the matrix. Since the RCM algorithm permutes both rows and columns, it incurs an unacceptable computational cost. To cope with this, vector reordering limits permuting only to the rows of the matrix. We show that even this simplified problem is NP-hard and requires efficient heuristics.

\textbf{Batching.}
The second task of the optimization stage is to \textit{reduce the number of page requests to the buffer manager}. Even when the requested page is in the buffer, there is a non-negligible overhead in retrieving and passing the page to the dot-product join operator. A straightforward optimization is to group together requests made to indexes co-located in the same page. By applying this strategy to the example in Figure~\ref{fig:dp-flow}, the number of requests in $\texttt{U}'$ is reduced to $16$---compared to $19$ in \texttt{U}. We take advantage of the reordering to group consecutive vectors into batches and make requests at batch-level---line~(\ref{alg:dot-product:batching}) in Algorithm~\ref{alg:dp}. This strategy is beneficial because similar vectors are grouped together by the reordering. For example, the number of page requests corresponding to $\texttt{U}''$ in Figure~\ref{fig:dp-flow} is only $6$---$2$ for each batch.

Formally, we have to identify the batches that minimize the number of page requests to \texttt{V} given a fixed ordering of the vectors in \texttt{U} and a memory budget $M$. Without the ordering constraint, this is the standard bin packing\footnote{\url{https://en.wikipedia.org/wiki/Bin_packing_problem}} problem---known to be NP-hard. Due to ordering, a simple greedy heuristic that iterates over the vectors and adds them to the current batch until the capacity constraint is not satisfied is guaranteed to achieve the optimal solution. The output is a set of batches with a variable number of vectors.

\subsubsection{Batch Execution}\label{ssec:dot-product:high-level:exec}

Given the batches $B_{j}$ extracted in the optimization stage, we compute the dot-products $\vec{u}_{i}\cdot V$ by making a single request to the buffer manager for all the pages accessed in the batch. It is guaranteed that these pages fit in memory at the same time. Notice, though, that vectors $\vec{u}_{i}$ are still processed one-by-one, thus, dot-product results are generated individually (lines~(\ref{alg:dot-product:batch-3})-(\ref{alg:dot-product:batch-4}) in Algorithm~\ref{alg:dp}). While batch execution reduces the number of requests to the buffer manager, the requests consist of a set of pages---not a single page. It is important to emphasize that decomposing the set request into a series of independent single-page requests is sub-optimal. For example, consider \texttt{U} in Figure~\ref{fig:dp-flow} to consist of $8$ batches of a single vector each. Vector $\vec{u}_{3}$ has a request for page 1 and 2, respectively. Page 2 and 3 are in memory. If we treat the 2-page request as two single-page requests, page 2 is evicted to release space for page 1, only to be immediately read back into memory. If the request is considered as a unit, page 3 is evicted.

In order to support set requests, the functionality of the buffer manager has to be enhanced. Instead of directly applying the replacement policy, e.g., LRU, there are two stages in handling a page set request. First, the pages requested that are memory-resident have to be pinned down such that the replacement policy does not consider them. Second, the remaining requested pages and the non-pinned buffered pages are passed to the standard replacement policy. \textit{Dot-Product Join Operator} includes this functionality in lines (\ref{alg:dot-product:batch-1})-(\ref{alg:dot-product:batch-2}), before the dot-product computation.

\eat{
The specific algorithm of dot-product join is shown in algorithm (\label{alg:dp}). The dot-product join operator keeps an in-memory cache of working set of table \texttt{V}. Looping through table \texttt{U}, tuples from table \texttt{V} are retrieved, either from cache or disk, based on the indexes of tuples in \texttt{U}. Cache replacement happens when the cache is full and the cache is maintain by least recently used (LRU) policy. The algorithm of dot-product operator is shown in algorithm (\ref{alg:dp}). The benefit of such design is that since tuples in \texttt{U} are sparse and small, the corresponding joined part of \texttt{V} will be easy to cache and, if well maintained, only minimum amount of data needs to be fetched from disk. This property gives the potential to compute the full dot product with limited memory and little disk IO.

However, our experiments show that the naive dot-product join operator in algorithm (\ref{alg:dp}) is very inefficient. There are two main problems with the naive dot-product join operator. First, a lookup has to be done for each dimension of the \texttt{ARRAY} attribute for tuples from table \texttt{U}. The number of lookups as well as the frequency of lookups in cache can be extremely high. This amplifies the cache lookup overhead and causes contention for synchronization data structures. Second, the cache locality can be very bad considering the fact that the access to table \texttt{V} is rather random and it ends up hitting the hard disk with very high ratio. We propose dynamic batching and reordering technique where the data is first reordered to maximize the number of co-located entries across tuples and batching is applied to the reordered tuples.
}

\eat{
\subsection{Dynamic Batching}\label{ssec:dot-product:batching}

\begin{figure}[htbp]
\begin{center}
\includegraphics[width=0.4\textwidth]{batching}
\caption{Batching Lookups}
\end{center}
\label{fig:batching}
\end{figure}

The idea of batching is to aggressively grouping individual lookups together as a super-lookup. A super-lookup is composed by batching lookups from two levels---indexes within a tuple and indexes across tuples. At tuple level, lookups for dimensions that will end up in a physical disk page are reduced to a super-lookup. Same idea is applied to adjacent tuples that are possibly accessing same physical pages, i.e. lookups of the same page from adjacent tuples are merged into a super-lookup if possible. In other words, a super-lookup only contains one page request. The matrix U'' in figure \ref{fig:dp-flow} shows an example of batching. $Tuple_{1}$, $tuple_{3}$, and $tuple_{5}$ are grouped as $batch_{1}$ where there will be only two super-lookups---$page_{1}$ and $page_{2}$ (assume only two indexes fit in one page). One the tuple-level batching happens at $index_{3}$ and $index_{4}$ of $tuple_{1}$ where they are co-located in $page_{2}$. Inter-tuple level batching happens at $page_{1}$ and $page_{2}$ where all three tuples sharing access. Without batching, there would be 8 lookups since there are overall 8 non-zero dimensions in $tuple_{1}$, $tuple_{3}$, and $tuple_{5}$. After constructing a batch, the tuples are transparent to the dot-product join operator, i.e. the operator does not know what dimensions are needed by what tuples. Thus, a batch is conceptually an atomic unit where either all the super-lookups are satisfied and the computation continues or the no tuple will proceed even though all of the dimensions the tuple need are already in memory.

One interesting problem is how to determine the batch size. Due to the atomic property of batch, the batch size can not grow forever. Having a batch with number of super-lookups greater than the cache size will block the execution since the super-lookups would never be satisfied. In the dot-product operator, the batching is done greedily and dynamically based on the available amount of cache. At batch construction, a new tuple is put into the batch if and only if the resulting super-lookups is smaller than the cache size. If the tuple fails the requirement, it should start with a new batch. Thus, how large a batch can be heavily depends on how many co-located super-lookups the adjacent tuples are sharing. However, even though there maybe large amount co-located super-lookups between tuples, it is meaningless if they come in bad order and are separated ``far away'' within table \texttt{U}. 
}

\section{Vector Reordering}\label{sec:reordering}

In this section, we discuss strategies for reordering vectors $\vec{u}_{i}\in U$ that minimize the number of secondary storage accesses. First, we prove that finding the optimal reordering is NP-hard. Then, we introduce three scalable heuristics that cluster similar vectors.

\eat{
We introduce reordering technique to optimize the order of tuple in \texttt{U} such that tuples with similar super-lookups are placed ``closely'', if not right next to each other. The ordering is done at two level as well---tuple level and batch level. Reordering is first done at the tuple level to achieve maximum batch size and then done at the batch level to maximize the page sharing across batches. 

To show how the order matters, we give an example matrix U of figure \ref{fig:dp-flow}. Assume that the cache can fit 2 pages at most so that a batch can hold a maximum of 2 super-lookups. As the order of the tuples are given in matrix U, there are only two tuples can be put in a shared batch, i.e. $tuple_{6}$ and $tuple_{7}$. All the other tuples have to be a batch by their own since any merging any consecutive tuples other than $tuple_{6}$ and $tuple_{7}$ would result in a batch with three super-lookups. Thus, even after batching is applied within tuples, there are 14 super-lookups overall ($tuple_{6}$ and $tuple_{7}$ share two super-lookups and each of the rest have two super-lookups). However, if a better order is given as matrix U', tuples can be batched as in U'' and the resulting number of super-lookups is only 6. Same reordering is applied at the batch level to maximize the cache reuse between batches, though in case of matrix U'', there is no difference between different batch orders and there will always be two page misses.
}

\subsection{Optimal Reordering is NP-hard}\label{ssec:reordering:np-hard}

We formalize the reordering problem as follows. Assume there are $N$ $d$-dimensional vectors $\{\vec{u}_{1}, \dots, \vec{u}_{N}\}$. Each vector $\vec{u}_{i}$ contains $r_{i}$ page requests grouped into a set $v_{p}^{i}=\{p_{1}^{i}, \dots, p_{r_{i}}^{i}\}$. Given two consecutive vectors $\vec{u}_{i}$, $\vec{u}_{i+1}$ and their corresponding page request sets $v_{p}^{i}$, $v_{p}^{i+1}$, the pages in the set difference $v_{p}^{i+1}\setminus v_{p}^{i}$ potentially require secondary storage access. Since we assume that all the pages in any set $v_{p}^{i}$ fit in memory, it is guaranteed that the pages in the set intersection $v_{p}^{i+1}\cap v_{p}^{i}$ are memory resident. Pages in $v_{p}^{i+1}$ that are not in $v_{p}^{i}$ can be in memory if they have been accessed before and not evicted. Let us denote the set difference between any two vectors $\vec{u}_{i}$ and $\vec{u}_{j}$ as $C^{i,j} = v_{p}^{i}\setminus v_{p}^{j}$. The cardinality of this set $\left|C^{i,j}\right|$ gives the number of potential secondary storage accesses. The goal of reordering is to identify the vector sequence $\{\vec{u}_{i_{1}}, \dots, \vec{u}_{i_{N}}\}$ that minimizes the cumulative cardinality $\left|C^{i_{j+1},i_{j}}\right|$ of the set difference between all the pairs of consecutive vectors:
\begin{equation}\label{eq:optim-order}
\textit{min}_{\{\vec{u}_{i_{1}}, \dots, \vec{u}_{i_{N}}\}} \sum_{j=1}^{N-1} \left|C^{i_{j+1},i_{j}}\right|
\end{equation}

\begin{thm}\label{thm:reordering-np-hard}
Finding the vector sequence $\{\vec{u}_{i_{1}}, \dots, \vec{u}_{i_{N}}\}$ that minimizes the cumulative cardinality $\sum_{j=1}^{N-1} \left|C^{i_{j+1},i_{j}}\right|$ is NP-hard.
\end{thm}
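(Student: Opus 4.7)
The plan is to reduce from the Hamiltonian path decision problem on undirected graphs, which is classically NP-complete. Given a graph $G = (V, E)$ with $V = \{v_1, \ldots, v_N\}$ and $D = \max_i \deg(v_i)$, I would build an instance of reordering with one vector per vertex: for each edge $e \in E$ introduce a distinct page identifier $p_e$; for each vertex $v_i$ introduce $D - \deg(v_i)$ private padding identifiers unique to $v_i$, and set $v_p^i$ equal to the union of $\{p_e : v_i \in e\}$ and the padding identifiers of $v_i$. Two properties are then immediate: $|v_p^i| = D$ uniformly across $i$, and $|v_p^i \cap v_p^j| = 1$ whenever $(v_i, v_j) \in E$ and $0$ otherwise---since padding identifiers are private, and the only page that two distinct vertices can share is the one attached to their common incident edge.

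The algebraic step driving the reduction is the identity $|C^{i,j}| = |v_p^i| - |v_p^i \cap v_p^j|$. Together with the uniform cardinalities, the objective of Eq.~(\ref{eq:optim-order}) rewrites as
\begin{equation*}
\sum_{j=1}^{N-1} |C^{i_{j+1}, i_j}| = (N-1)D - \sum_{j=1}^{N-1} |v_p^{i_{j+1}} \cap v_p^{i_j}|,
\end{equation*}
so minimizing the original objective is equivalent to maximizing the total pairwise intersection of consecutive vectors, while the leading term is independent of the ordering. By construction each consecutive intersection is $0$ or $1$, and equals $1$ exactly when the associated pair is an edge of $G$; the intersection sum is therefore bounded by $N-1$, with equality if and only if $(v_{i_1}, \ldots, v_{i_N})$ traces a Hamiltonian path of $G$.

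I would conclude that the optimum of the reordering objective equals $(N-1)(D-1)$ precisely when $G$ admits a Hamiltonian path, yielding a polynomial-time reduction from an NP-hard problem. The main obstacle I anticipate is ensuring that $\sum_{j=1}^{N-1} |v_p^{i_{j+1}}|$ is invariant across orderings: without uniform page-set cardinalities this term would itself depend on the permutation and would obscure the combinatorial content of the objective. The private padding identifiers accomplish exactly this, isolating the edge structure of $G$ inside the intersection sum. Finally, the total number of page identifiers introduced is $|E| + \sum_i (D - \deg(v_i)) = ND - |E| = O(N^2)$, so the construction is polynomial in $|V| + |E|$, completing the argument.
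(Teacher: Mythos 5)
Your proposal is correct, and it takes a genuinely different---and in fact more careful---route than the paper. The paper argues via the minimum Hamiltonian path problem on a \emph{complete weighted directed} graph: it builds a graph whose vertices are the vectors and whose edge weights are the set-difference cardinalities $\left|C^{j,i}\right|$, so that the optimal reordering coincides with the minimum-weight Hamiltonian path. As written, however, that construction maps a reordering instance \emph{to} a graph, which by itself only shows that reordering is no harder than weighted Hamiltonian path; to obtain NP-hardness one must go the other way and argue that an arbitrary weighted instance can be realized as page-request sets whose pairwise set differences reproduce the given weights---a realizability step the paper leaves implicit and which is not automatic, since set-difference cardinalities obey structural constraints that arbitrary weights need not. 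Your reduction sidesteps this entirely: starting from the unweighted, undirected Hamiltonian path decision problem, you construct explicit page sets (one shared page per edge, private padding to equalize all cardinalities at $D$), use the identity $\left|C^{i,j}\right| = \left|v_{p}^{i}\right| - \left|v_{p}^{i}\cap v_{p}^{j}\right|$ to turn the objective of Eq.~(\ref{eq:optim-order}) into $(N-1)D$ minus the number of consecutive pairs that are edges of $G$, and read off the threshold $(N-1)(D-1)$ attained exactly when a Hamiltonian path exists. The padding step is the key idea that makes the leading term order-invariant, and the whole construction is polynomial. What the paper's approach buys is brevity and a direct link to known approximation heuristics for Hamiltonian path (which motivates the nearest-neighbor-style algorithms in Section~\ref{sec:reordering}); what yours buys is a complete, self-contained hardness argument with an explicit gadget and a clean decision threshold.
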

\begin{proof}
We provide a reduction of the minimum Hamiltonian path in a weighted directed graph -- a known NP-complete problem~\cite{hamilton-np-complete} -- to the reordering problem. Given a weighted directed graph -- the edges are directional and have a weight attached -- the minimum Hamiltonian path traverses all the vertexes once and the resulting path has minimum weight.
We define a complete directed graph $G=(V,E)$ with $N$ vertexes and $N(N-1)$ edges. A vertex corresponds to each vector $\vec{u}_{i}$. For any pair of vertexes, i.e., vectors $\vec{u}_{i}$, $\vec{u}_{j}$, we introduce two edges $(\vec{u}_{i}$, $\vec{u}_{j})$ and $(\vec{u}_{j}$, $\vec{u}_{i})$, respectively. The weight of an edge $(\vec{u}_{i}$, $\vec{u}_{j})$ is given by the cardinality $\left|C^{j,i}\right| = \left|v_{p}^{j}\setminus v_{p}^{i}\right|$. Notice that set difference is not commutative. This is the reason for having directed edges. A Hamiltonian path in graph $G$ corresponds to a reordering because all the vectors are considered only once and all the orders are possible---there is an edge between any two vertexes. Since the weight of an edge is defined as the cardinality $\left|C^{i,j}\right|$, the weight of the minimum Hamiltonian path corresponds to the minimum cumulative cardinality in Eq.~(\ref{eq:optim-order}).
\end{proof}

\eat{
\begin{proof}
We prove the theorem by reduction from minimum Hamiltonian path problem on a complete graph. In a complete weighted directed graph $G$ with $N$ nodes, denote the edge weight between $node_{i}$ and $node_{j}$ as $e_{ij}$ and $e_{ji}$. A set of $N$ tuples $B$=\{$tuples_{1}$, $tuples_{2}$, ..., $tuples_{N}$\} can be constructed such that the difference between any two tuples $C_{ij}$ equals to $e_{ji}$. Thus, if there exist a polynomial-time algorithm can generate an optimal sequence that minimize the cumulative consecutive cost of set tuple $B$. We can use this sequence to obtain a Hamiltonian path from graph $G$ with minimum cost. Since finding a minimum Hamiltonian path in complete graph is NP-hard, there is no polynomial-time algorithm for our order finding problem unless $P=NP$.
\end{proof}
}

Several polynomial-time heuristic algorithms for computing an approximation to the minimum Hamiltonian path over a complete directed graph exist in the literature~\cite{approx-Hamiltonian-path}. Nearest neighbor is a standard greedy algorithm that chooses the closest non-visited vertex as the next vertex. It has computational complexity of $\mathcal{O}(N^{2})$ and non-constant approximation factor $\log(N)$. The straightforward application of the nearest neighbor algorithm to our problem is too expensive because we have to materialize a complete directed graph. This takes significant time even for values of $N$ in the order of thousands---not to mention the required space. If we generate the graph on-the-fly, the weight to all the non-visited vertexes has to be computed. While this may seem more efficient, this still has an overall computational complexity of $\mathcal{O}(N^{2}d)$. Due to these limitations of the nearest neighbor algorithm for complete graphs, we explore more efficient heuristics that avoid considering all the possible pairs of vertexes.

\subsection{LSH-based Nearest Neighbor}\label{ssec:reordering:lsh}

Locality-sensitive hashing (LSH)~\cite{lsh} is an efficient method to identify similar objects represented as high-dimensional sparse vectors. Similarity between vectors is defined as their Jaccard coefficient, i.e., $J(\vec{u}_{i},\vec{u}_{j}) = \frac {\left|v_{p}^{i}\cap v_{p}^{j}\right|} {\left|v_{p}^{i}\cup v_{p}^{j}\right|}$. The main idea is to build a hash index that groups similar vectors in the same bucket. Given an input vector, the most similar vector is found by identifying the vector with the maximum Jaccard coefficient between the vectors co-located in the same hash bucket. Essentially, the search space is pruned from all the vectors in the set to the vectors in the hash bucket---typically considerably fewer. The complexity of LSH consists in finding a hash function that preserves the Jaccard coefficient in mapping vectors to buckets---the probability of having two vectors in the same bucket approximates their Jaccard coefficient. Minwise hash functions~\cite{minwise-hash} satisfy this property on expectation. Given a set, a minwise hash function generates any permutation of its elements with equal probability. While such functions are hard to define, they can be approximated with a universal hash function~\cite{universal-hash} that has a very large domain, e.g., $2^{64}$. In order to increase the accuracy of correctly estimating the Jaccard coefficient, $m$ such minwise hash functions are applied to a vector. Their output corresponds to the signature of the vector which gives the bucket where the vector is hashed to. The value of $m$ is an important parameter controlling the number of vectors which are exhaustively compared to the input query vector. The larger $m$ is, the fewer vectors end-up in the same bucket. On the opposite, if $m=1$, all the vectors that share a common value can be hashed to the same bucket. Given a value for $m$, banding is a method that controls the degree of tolerated similarity. The $m$-dimensional signature is divided into $b$ $\left\lfloor m/b \right\rfloor$-dimensional bands and a hash table is built independently for each of them. The input vector is compared with all the co-located vectors of at least one hash table. Banding decreases the Jaccard coefficient threshold acceptable for similarity, while increasing the probability that all the vectors that have a higher coefficient than the threshold are found.

\begin{algorithm}[htbp]
\caption{LSH Reordering}\label{alg:lsh}
\algsetup{linenodelimiter=.}

\begin{algorithmic}[1]

\REQUIRE ~~\\
Set of vectors $\{\vec{u}_{1}, \dots, \vec{u}_{N}\}$ with page requests\\
$m$ minwise hash functions grouped into $b$ bands
\ENSURE Reordered set of input vectors $\{\vec{u}_{i_{1}}, \dots, \vec{u}_{i_{N}}\}$

\item[\underline{Compute LSH tables}]
\FOR{\textbf{each} vector $\vec{u}_{i}$}
	\STATE Compute $m$-dimensional signature $(s_{1}, \dots, s_{m})$ based on minwise hash functions and group into $b$ bands $\textit{band}_{k} = (s_{(k-1)\cdot \left\lfloor \frac{m}{b} \right\rfloor + 1}, \dots, s_{k\cdot \left\lfloor \frac{m}{b} \right\rfloor})$
	\STATE Insert vector $\vec{u}_{i}$ into hash table $\textit{Hash}_{k}$, $1\leq k \leq b$, using minwise hash function of $\textit{band}_{k}$
\ENDFOR

\item[\underline{LSH-based nearest neighbor search}]
\STATE Initialize $\vec{u}_{i_{1}}$ with a random vector $\vec{u}_{i}$
\FOR{$j=1$ \textbf{to} $N-1$}
	\STATE Let $X_{k}$ be the set of vectors co-located in the same bucket with $\vec{u}_{i_{j}}$ in hash table $\textit{Hash}_{k}$ and not selected
	\STATE $X \leftarrow X_{1} \cup \dots \cup X_{b}$
	\STATE Let $\vec{u}_{i_{j+1}}$ be the vector in $X$ with the minimum set difference cardinality $\left|C^{i_{j+1},i_{j}}\right|$ to the current vector $\vec{u}_{i_{j}}$
\ENDFOR

\end{algorithmic}
\end{algorithm}

\textit{Compute LSH tables} section in Algorithm~\ref{alg:lsh} summarizes the construction of the LSH index for the vector reordering problem. Although we do not measure similarity using the Jaccard coefficient, there is a strong correlation between set difference cardinality and the Jaccard coefficient. Intuitively, the higher the Jaccard coefficient, the larger the intersection between two sets relative to their union. This translates into small set difference cardinality. Since the goal of reordering is to cluster vectors with small differences, LSH places them into the same bucket with high probability.

We compute the output vector reordering by executing the nearest neighbor heuristic over the LSH index (section \textit{LSH-based nearest neighbor search} in Algorithm~\ref{alg:lsh}). We believe this is a novel application of the LSH technique, typically used for point queries. The algorithm starts with a random vector. The next vector is selected from the vectors co-located in the same bucket across at least one other band of the LSH index. The process is repeated until all the vectors are selected. Bands play a very important role in reordering because they allow for smooth bucket transition. This is not possible with a single band since there is no strict ordering between buckets.  In this situation, choosing the next bucket involves inspecting all the other buckets of the hash table.

In general, LSH reduces significantly the number of vector pairs for which the exact set difference has to be computed---only $\mathcal{O}(N)$ vector pairs are considered, i.e., a constant number for each vector. Thus, the overall complexity of \textit{LSH Reordering} -- $\mathcal{O}(Ndm)$ -- is dominated by the construction of the LSH index.

\begin{ex}
We illustrate how LSH reordering works for the set of vectors $U$ depicted in Figure~\ref{fig:dp-flow}. To facilitate understanding, we set $m=2$ and $b=2$, i.e., two LSH indexes with 1-D signatures. Since $m/b=1$, there are many conflicts in each bucket of the two bands. Even though this does not reduce dramatically the number of vector pairs that require full comparison, it shows how bucket transition works. Let the two minwise hash functions generate the following permutations: $\{2, 3, 1\}$ and $\{3, 1, 2\}$, respectively. Remember that the reordering is done at page level. The LSH index for the first band has two buckets, for key $2: \{\vec{u}_{1},\vec{u}_{2},\vec{u}_{3},\vec{u}_{4},\vec{u}_{5},\vec{u}_{8}\}$ and key $3: \{\vec{u}_{6},\vec{u}_{7}\}$. The LSH index for the second band also has two buckets, for key $1: \{\vec{u}_{1},\vec{u}_{3},\vec{u}_{5}\}$ and key $3: \{\vec{u}_{2},\vec{u}_{4},\vec{u}_{6},\vec{u}_{7},\vec{u}_{8}\}$. Let $\vec{u}_{1}$ be the random vector we start the  nearest neighbor search from. The vectors considered at the first step of the algorithm are $\{\vec{u}_{2},\vec{u}_{3},\vec{u}_{4},\vec{u}_{5},\vec{u}_{8}\}$. All of them are contained in bucket with key $2$ of the first band. Since $\vec{u}_{3}$ and $\vec{u}_{5}$ have set difference 0 to $\vec{u}_{1}$, one of them is selected as $\vec{u}_{i_{2}}$ and the other as $\vec{u}_{i_{3}}$. At this moment, the bucket with key $1$ from the second band is exhausted and one of $\vec{u}_{2}$, $\vec{u}_{4}$, and $\vec{u}_{8}$ is selected. Independent of which one is selected, bucket transition occurs since the new vectors $\vec{u}_{6},\vec{u}_{7}$ are co-located in bucket with key $3$ of the second band. By following the algorithm to termination, $U'$ in Figure~\ref{fig:dp-flow} is a possible solution.
\end{ex}

\eat{
One heuristic for minimum Hamiltonian path problem is nearest neighbor search\cite{nns}. Nearest neighbor search is a greedy algorithm that generates an order by looking at the current point and find the next point with minimum cost. However, the complexity of nearest neighbor search is $O(n^2)$ where $n$ is the number of points. Suffering from same issue as k-center clustering in high-dimensional space. Nearest neighbor search is infeasible in that it incurs more overhead in computing a good order and the overhead offsets the saving from disk IO.

\textbf{Locality Sensitive Hashing}. Locality sensitive hashing (LSH)\cite{lsh} is an effective fast approximate algorithm to nearest neighbor search. The idea of LSH is to hash high-dimensional data to low-dimensional data with the property that the similarity between original high-dimensional data is preserved in the hashed low-dimensional data. The hash function of LSH aims to generate maximum "collision" for similar items so that similar items can be found in a very fast manner. 

The hash scheme we use is min-hashing~\cite{min-hash} in which a minhash signature is generated by repeatedly permuting the original dimensions randomly and at each permutation the first non-zero dimension is used as a minhash entries. As a result, the dimensionality of the resulting minhash signature equals the number of rounds of permutations which can be several magnitudes smaller than the original dimensionality. By only checking data with same minhash signature entries, linear scan is avoided and only a small set of data will be examined. The nice property of min-hashing is that for a min-hash function $h_{min}$, the probability that set $A$ and $B$ have same hash value equals to the Jaccard similarity coefficient of $A$ and $B$, i.e. $Pr[ h_{min}(A) = h_{min}(B)] = J(A,B) = {{|A \cap B|}\over{|A \cup B|}}$. Thus, we can used the minhashing signatures in replacement for the original high-dimensional to approximately find nearest neighbors. Of course, really permuting the data would be infeasible and in practice a set of permuting hash functions are used in stead. We also use banding techniques on LSH to further increase efficiency. A more detailed discussion of LSH can be found in book~\cite{dbbook}. 
}

\eat{
To illustrate how LSH is used in reordering the batches, we give an concrete examples for the set of batch requests in figure~\ref{fig:reorder}. Assume we have three different permutations of the pages---\{\{$page_{1}$, $page_{3}$, $page_{2}$, $page_{5}$, $page_{4}$\}, \{$page_{3}$, $page_{1}$, $page_{5}$, $page_{2}$, $page_{4}$\}, \{$page_{4}$, $page_{5}$, $page_{1}$, $page_{3}$, $page_{2}$\}\}. According the three permutation, the minhash signature of $batch_{1}$ is \{$page_{1}$, $page_{3}$, $page_{4}$\} since $batch_{1}$ happens to have $page_{1}$, $page_{3}$, $page_{4}$ which are the first non-zero pages in the three page permutations respectively. Similarly, the minhash of $batch_{2}$, $batch_{3}$, and $batch_{4}$ are \{$page_{1}$, $page_{1}$, $page_{5}$\}, \{$page_{3}$, $page_{3}$, $page_{4}$\} and \{$page_{1}$, $page_{1}$, $page_{1}$\}. When seeking for nearest neighbor of $batch_{1}$, only $batch_{2}$ and $batch_{3}$ need to be checked since $batch_{4}$'s signature does not have any overlap with $batch_{1}$'s and $batch_{3}$ will be chose due to more overlap. Then in seeking the nearest neighbor of $batch_{3}$ a random unchoosen batch will be picked since there is no overlap between $batch_{3}$'s signature and the signatures of the rest batches. Assume $batch_{4}$ is picked and the final order will be $batch_{1}$, $batch_{3}$, $batch_{4}$, and $batch_{2}$ which is same as the right side order of figure~\ref{fig:reorder}. As mentioned earlier, banding technique is used so that in practice a batch only to be checked when several entries in the minhash signature are all the same which further narrows down the number of candidate batches to be checked.
}

\subsection{Radix Sort}\label{ssec:reordering:sorting}

Sorting is a natural approach to generate a reordering of a set as long as a strict order relationship can be defined between any two elements of the set. The Jaccard coefficient gives an order only with respect to a fixed reference, i.e., given a reference vector we can quantify which of any other two vectors is smaller based on their Jaccard coefficient with the reference. However, this is not sufficient to generate a complete reordering.

It is possible to imagine a multitude of ordering strategies for a set of sparse $d$-dimensional vectors. The simplest solution is to consider the dimensions in some arbitrary order, e.g., from left to right. Vector $\vec{u}_{1}$ is smaller than $\vec{u}_{2}$, i.e., $\vec{u}_{1} < \vec{u}_{2}$, for $U$ in Figure~\ref{fig:dp-flow} since it has a non-zero entry at index $1$, while the first non-zero entry in $\vec{u}_{2}$ is at index $3$. In order to cluster similar vectors together, a better ordering is required. Our strategy is to sort the dimensions according to the frequency at which they appear in the set of sparse vectors and compare vectors dimension-wise based on this order. This is exactly how radix sort~\cite{algorithms-intro} works, albeit without reordering the dimensions in descending order of their frequency. A similar idea is proposed for SpMV on GPU in~\cite{spmv-vldb}. Algorithm \textit{Radix Sort Reordering} depicts the entire process. The two stages -- frequency computation and radix sort -- are clearly delimited. Since their complexity is $\mathcal{O}(Nd)$, the overall is $\mathcal{O}(Nd)$.

\begin{algorithm}[htbp]
\caption{Radix Sort Reordering}\label{alg:radix}
\algsetup{linenodelimiter=.}

\begin{algorithmic}[1]

\REQUIRE Set of vectors $\{\vec{u}_{1}, \dots, \vec{u}_{N}\}$ with page requests
\ENSURE Reordered set of input vectors $\{\vec{u}_{i_{1}}, \dots, \vec{u}_{i_{N}}\}$

\item[\underline{Page request frequency computation}]
\STATE Compute page request frequency across vectors $\{\vec{u}_{1}, \dots, \vec{u}_{N}\}$
\FOR{\textbf{each} vector $\vec{u}_{i}$}
	\STATE Represent $\vec{u}_{i}$ by a bitset of 0's and 1's where a 1 at index $k$ corresponds to the vector requesting page $k$
	\STATE Reorder the bitset in decreasing order of the page request frequency, i.e., index 1 corresponds to the most frequent page
\ENDFOR

\item[\underline{Radix sort}]
\STATE Apply radix sort to the set of bitsets
\STATE Let $\vec{u}_{i_{j}}$ be the vector corresponding to the bitset at position $j$ in the sorted order
\end{algorithmic}
\end{algorithm}

\begin{ex}
We illustrate how radix sort reordering works for the set of vectors $U$ in Figure~\ref{fig:dp-flow}. The algorithm operates at page level. Since page $2$ is the most frequent -- it appears in $6$ vectors -- it is the first considered. Two partitions are generated: $p_{0} = \{\vec{u}_{1},\vec{u}_{3},\vec{u}_{5},\vec{u}_{2},\vec{u}_{4},\vec{u}_{8}\}$ accesses page $2$; and $p_{1} = \{\vec{u}_{6},\vec{u}_{7}\}$ does not. This is exactly $U'$ in Figure~\ref{fig:dp-flow}. The frequency of page $1$ and page $3$ is $5$, thus, any of them can be selected in the second iteration. Assume that we brake the ties using the original index and we select page $1$. Each of the previous two partitions is further split into two. For example, $p_{00} = \{\vec{u}_{1},\vec{u}_{3},\vec{u}_{5}\}$ and $p_{01} = \{\vec{u}_{2},\vec{u}_{4},\vec{u}_{8}\}$. The important thing to remark is that vectors accessing page $1$ split in the first iteration cannot be grouped together anymore. If we follow the algorithm to completion, $U'$ in Figure~\ref{fig:dp-flow} is generated.
\end{ex}

The intuition behind radix sort reordering is that -- by considering pages in decreasing order of their frequency -- the most requested page is accessed from secondary storage exactly once; the second most accessed page at most twice; and so on. This holds true because all the pages accessed by a vector fit together in memory. It is important to notice that -- although the number of accesses increases -- the request frequency decreases for pages considered at later iterations. This guarantees that the maximum number of accesses to a page is bounded by $\textit{min}\left\{2^{\textit{rank}},\textit{freq}\right\}$, where \textit{rank} is the rank of the page and \textit{freq} is the access frequency. Essentially, radix sort reordering is a dimension-wise greedy heuristic.

\eat{
Sorting is a dimension-wise greedy strategy to generate good orders. The ideas is to group all the tuples sharing one dimension together. Since there are millions or billion of dimensions, priorities have to be given to all the dimensions so that the grouping happens in the given order. Since we do batching for tuples, one dimension here means one page. Interestingly, this is essentially a sorting scheme and priorities are determined by different definition of sorting comparator. For example, a very simple policy can be that pages with smaller page numbers get higher priority. A comparator for this scheme is given in algorithm (\ref{alg:dp-sorting}).

\begin{algorithm}[htbp]
\caption{Comparator -- GreaterThan($tuple_{1}$, $tuple_{2}$)}\label{alg:dp-sorting}
\algsetup{linenodelimiter=.}

\begin{algorithmic}[1]

\REQUIRE ~~\\
 $tuple_{1}$ ($index$ integer[ ], $value$ numeric[ ])\\
 $tuple_{2}$ ($index$ integer[ ], $value$ numeric[ ])\\
\ENSURE true/false
\STATE compute page request set $P_{1}$ for $tuple_{1}$
\STATE compute page request set $P_{2}$ for $tuple_{2}$
\FOR{$i$ := 0 to total pages number}
	\IF {$page_{i}$ $\in$ $P_{1}$ AND  $page_{i}$ $\not\in$ $P_{1}$}
	\STATE return true
	\ENDIF
	\IF {$page_{i}$ $\not\in$ $P_{1}$ AND  $page_{i}$ $\in$ $P_{1}$}
	\STATE return false
	\ENDIF
\ENDFOR
\RETURN {false}

\end{algorithmic}
\end{algorithm}

Take the matrix U in figure \ref{fig:dp-flow} for example. If we apply sorting with comparator in algorithm (\ref{alg:dp-sorting}) to tuples, the result order would be: $tuple_{1}$, $tuple_{3}$, $tuple_{5}$, $tuple_{6}$, $tuple_{7}$, $tuple_{2}$, $tuple_{4}$, $tuple_{8}$. Since the priority is given to pages with smaller page number, $page_{1}$ gets the highest priority and tuples ($1$, $3$, $5$, $6$, $7$) are grouped together. Within the group, $page_{2}$ is given higher priority to $page_{3}$, thus $tuple_{1}$, $tuple_{3}$, and $tuple_{5}$ comes in front of $tuple_{6}$ and $tuple_{7}$. Surprisingly, this sorting generates same batches as shown in matrix U'' in figure \ref{fig:dp-flow} but just in a different order. As mentioned earlier, this different order has exactly same number of super-lookups---6---as the batch order in U''. A better sorting comparator is to give high priority to pages appear with higher frequency in matrix U. 

The sorting heuristic has a complexity of $O(nlogn)$ where $n$ is the number of tuples. At first glance, this is not too much better than the clustering heuristic. However, computing the distance between points in high dimensional space requires looping through all the dimensions all tuples which is very time consuming. On the other hand, sorting would return as long as there is a difference between each page sets. Thus, the higher the dimensionality is, the more likely two tuples would have dimensions that are not in common. This feature makes sorting heuristic particularly fast in high dimensional space.
}

\subsection{K-Center Clustering}\label{ssec:reordering:clustering}

Since the goal of reordering is to cluster similar vectors together, we include a standard k-center clustering~\cite{approximation-algorithms} algorithm as a reference. The main challenge is that -- similar to the Jaccard coefficient -- clustering does not impose a strict ordering between two vectors---only with respect to the common center. This is also true for centers. The stopping criterion for our hierarchical k-center clustering is when all the clusters have page requests that fit entirely in memory. As long as this does not hold, we invoke the algorithm recursively for the clusters that do not satisfy this requirement. The resulting clusters are ordered as follows. We start with a random cluster and select as the next cluster the one with the center having the minimum set difference cardinality. Notice that reordering the vectors inside a cluster is not necessary since they all fit in memory. This procedure is depicted in Algorithm~\ref{alg:k-center}. It has complexity $\mathcal{O}(Ndk)$, where $k$ is the resulting number of clusters, i.e., centers.

\begin{algorithm}[htbp]
\caption{K-Center Reordering}\label{alg:k-center}
\algsetup{linenodelimiter=.}

\begin{algorithmic}[1]

\REQUIRE Set of vectors $\{\vec{u}_{1}, \dots, \vec{u}_{N}\}$ with page requests
\ENSURE Reordered set of input vectors $\{\vec{u}_{i_{1}}, \dots, \vec{u}_{i_{N}}\}$

\STATE Initialize first set of $k$ centers with random vectors $\vec{u}_{i}$
\STATE Assign each vector to the center having the minimum set difference cardinality
\STATE Let $X_{l}$ be the set of vectors assigned to center $l$, $1 \leq l \leq k$
\STATE Call \textit{K-Center Reordering} recursively for the sets $X_{l}$ with requests that do not fit in memory

\STATE Reorder centers and their corresponding vectors

\end{algorithmic}
\end{algorithm}

\eat{
K-center clustering is problem where a set of $k$ points have to be identified in the multi-dimensional space where the maximum distance between the non-center points to the center point is minimized. In stead of solving the original ordering problem, we can think of the ordering problem as a clustering problem in which the tuples within a cluster is similar to each other and tuples across clusters are less similar. Farthest-first traversal can be used to get an approximate solution in $O(kn)$ time where $n$ is the overall number of tuple. In farthest-first traversal is an iterative algorithm where at each iteration a new center is found by picking the the point with farthest distance to all the existing centers. Though seem decent in time complexity, it is time consuming since we need to compute distance between tuples which are in extremely high dimensional space, e.g. millions or even billions. It is also hard to come up with an universal good $k$ to start with.
}

\eat{
\begin{algorithm}[htbp]
\caption{Improved Dot-Product Operator}\label{alg:dp-improved}
\algsetup{linenodelimiter=.}

\begin{algorithmic}[1]

\REQUIRE ~~\\
 U: ($index$ integer[ ], $value$ numeric[ ], $tid$ integer)\\
 V: ($index$ integer, $value$ numeric)\\
\ENSURE DP: ($tid$ integer, $dpSum$ numeric)
\STATE initialize cache for V
\STATE reorders tuples in U
\STATE batch list $B$ = $\emptyset$, batch $b$ =$\emptyset$ 
\FOR{tuple $u$ in U}
	\IF{batch + tuple $u$ > cache size}
		\STATE insert $b$ to $B$
		\STATE $b$ = $\emptyset$
		\STATE merge $u$ into batch $b$
	\ENDIF
\ENDFOR
\STATE reorders batches in batch list $B$
\FOR{batch $b$ in batch list $B$}

	\FOR{super-lookup $s$ in batch $b$}	
		\IF {$s$ not in cache}
		\STATE read $page_{s}$ from V
		\STATE put $page_{s}$ in cache (maintain by LRU)
		\ENDIF
	\ENDFOR
	
	\FOR {tuple $t$ in batch $b$} 
		\STATE $dpSum$=$\sum_{i} t.value[i] \times V.value_{i}$\\
		\STATE insert ($t.tid$, $dpSum$) into DP   
    \ENDFOR
\ENDFOR
\RETURN {DP}

\end{algorithmic}
\end{algorithm}

Combining batching and reordering techniques, an improved version of dot product operator is presented in algorithm (\ref{alg:dp-improved}). The reordering is done twice at tuple (line 2) and batch level (line 4). Batches are constructed after tuple level reordering. It is worth noting that the reordering happens only at the in-memory portion of table \texttt{U} when table \texttt{U} does not fit in memory. In this case, the reordering is only a partial reordering. After reodering and batching, supler-lookups are performed at batch level and dot-product is computed for each tuple in the same batch after the all the super-lookups in batch are satisfied.
}

\subsection{Discussion}\label{ssec:reordering:discussion}

We propose three heuristic algorithms for the vector reordering problem. LSH and k-center cluster similar vectors together. LSH uses the Jaccard coefficient to hash similar vectors to the same bucket of a hash table and then executes nearest neighbor search starting from a random vector. K-center clustering partitions the vectors recursively based on an increasing set of centers. Both methods are limited by partial ordering between two vectors and incur overhead to define a strict ordering. Moreover, they are randomized algorithms sensitive to the initialization and a handful of parameters. Radix sort imposes a strict ordering at the expense of not considering the entire vector in clustering. We alleviate this problem by sorting the dimensions based on their access frequency. This bounds the total number of secondary storage accesses. In the experimental evaluation (Section~\ref{sec:experiments}), we compare these algorithms thoroughly in terms of execution time and reordering quality.

\section{Gradient Descent Integration}\label{sec:dp-gd}

\begin{minipage}{.35\textwidth}
In this section, we show how to integrate the dot-product join operator in gradient descent optimization for Big Model analytics. We discuss the benefits of the operator approach compared to the relational and \texttt{ARRAY}-relation solutions presented in Section~\ref{sec:baseline}.

Figure~\ref{fig:sgd-integration} depicts the gradient computation required in gradient descent optimization. Vector dot-product is only a subroutine in this process. As we move from the relational solution to the proposed dot-product join operator, the query plan becomes considerably simpler. The relational solution consists of two parts. In the first part, the dot-product corresponding to a vector $\vec{u}_{i}$ is computed. Since vector components are represented as independent tuples with a common identifier, this requires a group-by on \textit{tid}. However, this results in the loss of the vector components, required for gradient computation (see Eq.~(\ref{eq:lr-gradient})). Thus, a second join group-by is necessary in order to compute each component of the gradient.
\end{minipage}\hfill
\begin{minipage}{.6\textwidth}
\begin{figure}[H]
\begin{center}
\includegraphics[width=\textwidth]{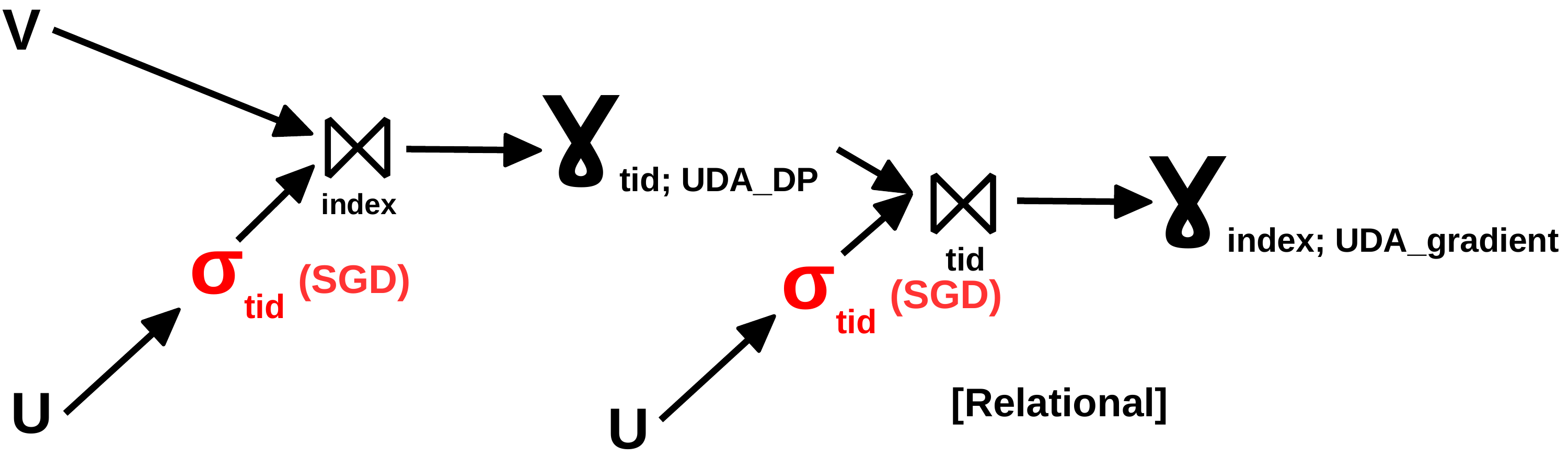}\\
\includegraphics[width=\textwidth]{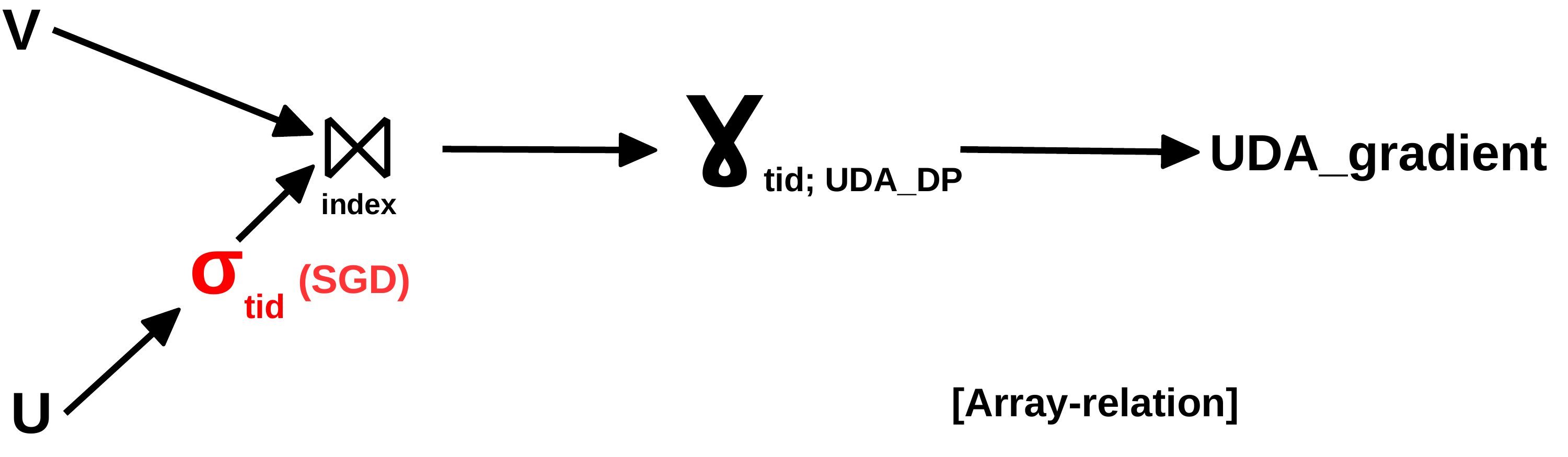}\\
\includegraphics[width=\textwidth]{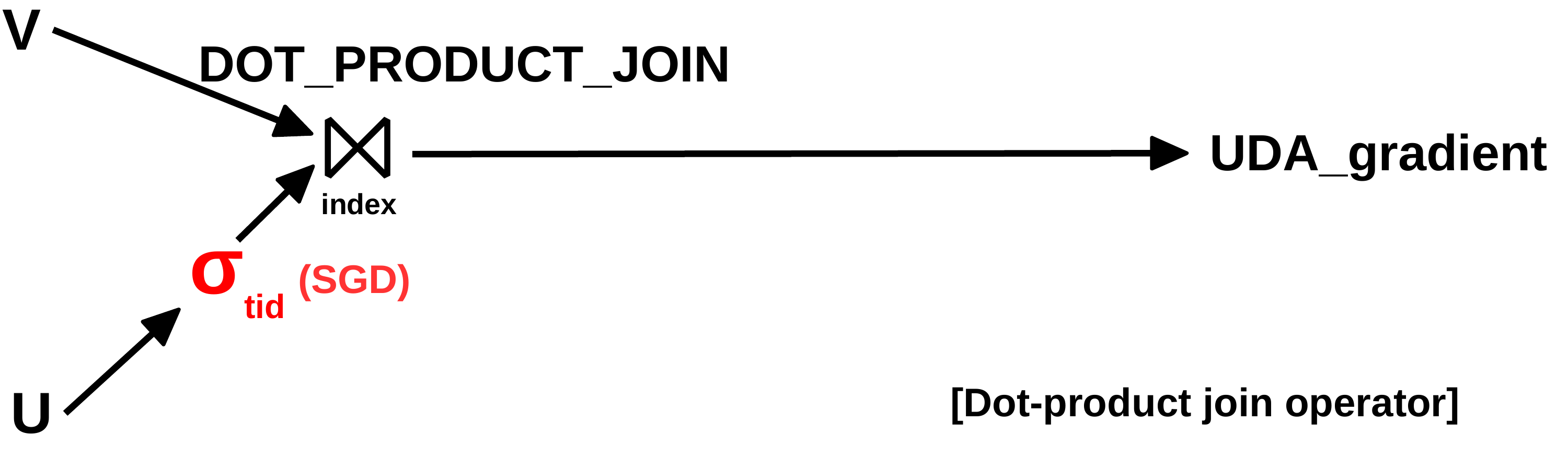}\\
\caption{Gradient descent integration.}\label{fig:sgd-integration}
\end{center}
\end{figure}
\end{minipage}\hfill
As we show in the experimental evaluation, this turns out to be very inefficient. The \texttt{ARRAY}-relation solution is able to discard the second join group-by because it groups vector components as an array attribute of a tuple.
 
The proposed dot-product join operator goes one step further and pushes the group-by aggregation inside the join. While this idea has been introduced in~\cite{sgd-over-join} for BGD over normalized example data, the dot-product join operator considers joins between examples and the model and works for BGD and SGD alike---SGD requires only an additional selection. In~\cite{sgd-over-join}, the model $V$ is small enough to fit in the state of the UDA. The main benefit of pushing the group-by aggregation inside the join is that the temporary join result is not materialized---in memory or on secondary storage. The savings in storage can be several orders of magnitude, e.g., with an average of $1000$ non-zero indexes per vector $\vec{u}_{i}$, the temporary storage -- if materialized -- is 3 orders of magnitude the size of $U$. By discarding the blocking group-by on \textit{tid}, the overall gradient computation becomes non-blocking since dot-product join is non-blocking.

\textbf{SGD considerations.}
In order to achieve faster convergence, SGD requires random example traversals. Since dot-product join reorders the examples in order to cluster similar examples together, we expect this to have a negative effect on convergence. However, the reordering in dot-product join is only local---at page-level. Thus, a simple strategy to eliminate the effect of reordering completely is to estimate the gradient at page-level---the number of steps in an iteration is equal to the number of pages. Any intermediate scheme that trades-off convergence speed with secondary storage accesses can be imagined. To maximize convergence, the data traversal orders across iterations have to be also random. This is easily achieved with LSH and K-center reordering---two randomized algorithms. A simple solution for Radix is to randomize the order of pages across iterations.

\begin{figure}[htbp]
\begin{center}
\includegraphics[width=.6\textwidth]{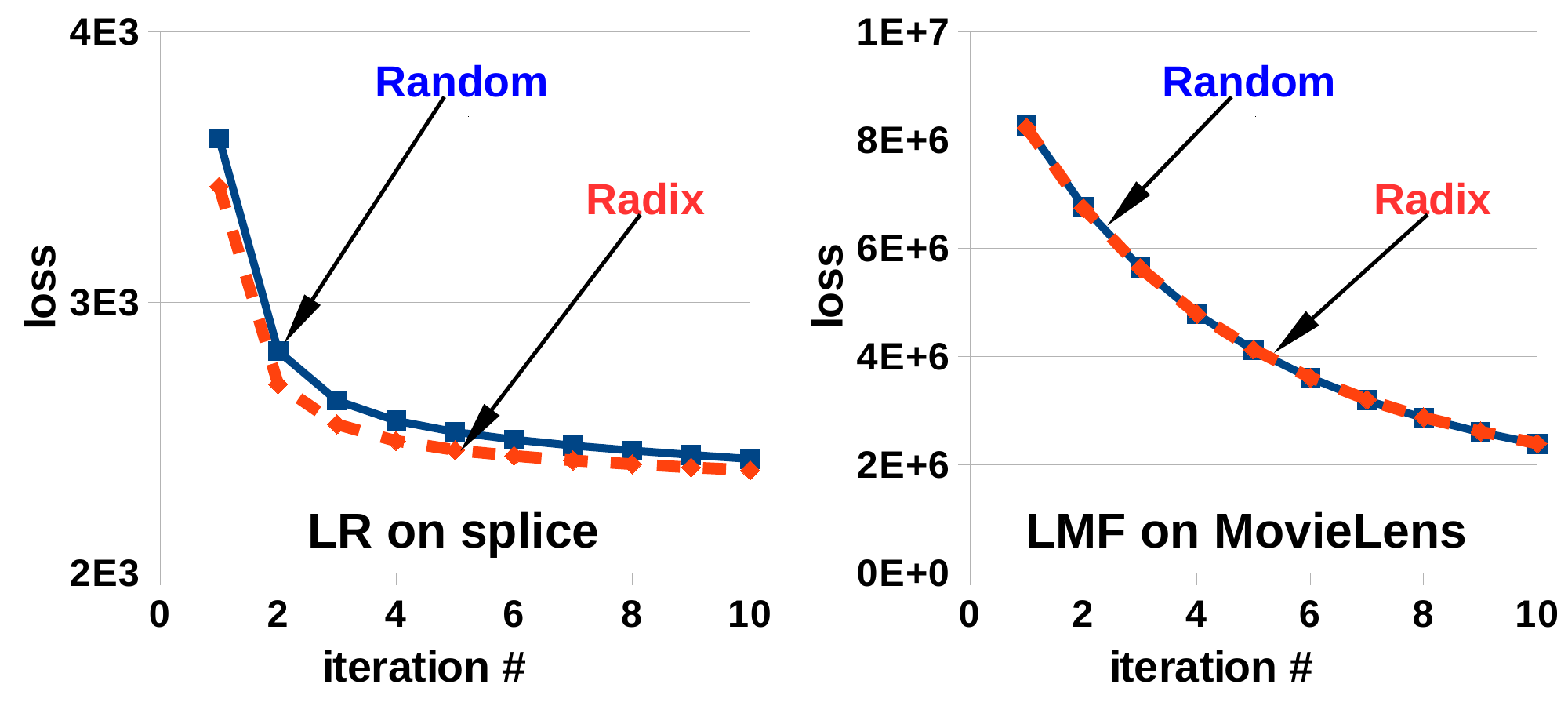}
\caption{Effect of Radix reordering on SGD convergence.}\label{fig:sgd-convergence}
\end{center}
\end{figure}

We provide experimental evidence that quantifies the impact of Radix reordering -- the strictest solution -- on convergence. Figure~\ref{fig:sgd-convergence} depicts the behavior of the loss function when 10 SGD iterations are executed for LR and LMF models over two real datasets (see Section~\ref{sec:experiments} for details). Radix reordering does not degrade the convergence speed compared to a random data traversal---the model is updated after each example. Moreover, for the LR model, Radix reordering actually improves convergence. These results are in line with those presented in~\cite{bismarck}, where only complete data sorting on the label has a negative impact on convergence.

\section{Experimental Evaluation}\label{sec:experiments}

In the section, we first evaluate the effectiveness and efficiency of the three reordering heuristics. Then, we apply the reordering heuristics to dot-product join and measure the effect of reordering and batching under different resource constraints. Finally, we measure the end-to-end dot-product and gradient descent execution time as a function of the amount of memory available in the system. We also compare dot-product join with the baseline alternatives introduced in Section~\ref{sec:baseline} across several synthetic and real datasets. Specifically, the experiments we design are targeted to answer the following questions:
\begin{compactitem}
\item How effective are the reordering heuristics and which one should be used under what circumstance?
\item How do reordering and batching affect the runtime and what is their overhead?
\item What is the sensitivity of the dot-product join operator with respect to the available memory?
\item How does dot-product join compare with other solutions?
\item What is the contribution of dot-product join within the overall Big Model gradient descent optimization?
\end{compactitem}

\subsection{Setup}\label{sec:experiments:setup}

\textbf{Implementation.}
We implement dot-product join as a new array-relation operator in GLADE~\cite{glade:sigmod}---a state-of-the-art parallel data processing system that executes analytics tasks expressed with the UDA interface. GLADE has native support for the \texttt{ARRAY} data type. Dot-product join is implemented as an optimized index join operator. It iterates over the pages in \texttt{U}. For each page, it applies the reordering and batching optimizations and then probes the entries in \texttt{V} at batch granularity. The dot-product corresponding to a vector is generated in a single pass over the vector and pipelined into the gradient UDA.

\textbf{System.}
We execute the experiments on a standard server running Ubuntu 14.04 SMP $64$-bit with Linux kernel 3.13.0-43. The server has 2 AMD Opteron 6128 series 8-core processors -- 16 cores -- 28 GB of memory, and 1 TB 7200 RPM SAS hard-drive. Each processor has 12 MB L3 cache, while each core has 128 KB L1 and 512 KB L2 local caches. The average disk bandwidth is 120 MB/s.

\textbf{Methodology.}
We perform all experiments at least 3 times and report the average value as the result. In the case of page-level results, we execute the experiments over the entire dataset -- all the pages -- and report the average value computed across the pages. We always enforce data to be read from disk in the first iteration by cleaning the file system buffers before execution. Memory constraints are enforced by limiting the batch size, i.e., the number of vectors $\vec{u}_{i}$ that can be grouped together after reordering.

\begin{table}[htbp]
  \begin{center}
    \begin{tabular}{l||r|r|r|r}

	\textbf{Dataset} & \textbf{\# Dims} & \textbf{\# Examples} &\textbf{Size}& \textbf{Model}\\

	\hline
	
	\texttt{uniform} & 1B & 80K & 4.2 GB & 8 GB\\
	
	\texttt{skewed} & 1B & 1M & 4.5 GB & 8 GB\\

	\texttt{matrix} & 10M x 10K & 300M & 4.5 GB & 80 GB\\

	\hline

	\texttt{splice} & 13M & 500K & 30 GB & 100 MB\\

	\texttt{MovieLens} & 6K x 4K & 1M & 24 MB & 80 MB\\
    \end{tabular}
  \end{center}

\caption{Datasets used in the experiments.}\label{tbl:datasets}
\end{table}

\textbf{Datasets and tasks.}
We run experiments over five datasets---three synthetic and two real. Table~\ref{tbl:datasets} shows their characteristics. \texttt{uniform} and \texttt{skewed} contain sparse vectors with dimensionality 1 billion having non-zero entries at random indexes. For \texttt{uniform}, the non-zero indexes are chosen with uniform probability over the domain. On average, there are $3000$ non-zero entries for each example vector. In the case of \texttt{skewed}, the frequency of non-zero indexes is extracted from a zipf distribution with coefficient $1.0$. The index and the vectors are randomly generated. On average, there are only $300$ non-zero entries for each vector. However, the number of example vectors is much larger---at least as large as the highest frequency. The size of the model for both datasets is $8$ GB since we store the model in dense format. While not every index is accessed, each page of the model is accessed. \texttt{matrix} is generated following the same process, with the additional constraint that there is at least a non-zero entry for each index---if there is no rating for a movie/song, then it can be removed from the data altogether. The properties of \texttt{matrix} follow closely the Spotify example given in the introduction. \texttt{splice}~\cite{vowpal-wabbit} is a real massive dataset for distributed gradient descent optimization, $3.2$ TB in full size. We extract a $1\%$ sample for our single-disk experiments. However, the dimensionality of the model is preserved. We notice that \texttt{splice} is, in fact, a uniform dataset. \texttt{MovieLens}~\cite{bismarck} is a small real dataset -- both in terms of number of examples and dimensions -- that we include mainly to study the impact of reordering on convergence speed. It is important to emphasize that model size, i.e., dimensionality, is the main performance driver. The number of examples, i.e., size, has a linear impact on execution time. We evaluate the vector dot-product independently as well as part of SGD for LR and LMF models. We execute LR over \texttt{uniform}, \texttt{skewed}, and \texttt{splice}; LMF with rank $1000$ over \texttt{matrix} and \texttt{MovieLens}. We present only the most relevant results, although we execute experiments for all combinations of datasets, tasks, and parameter configurations.

\begin{figure*}[htbp]
\begin{center}
\subfloat[]{\includegraphics[width=0.48\textwidth]{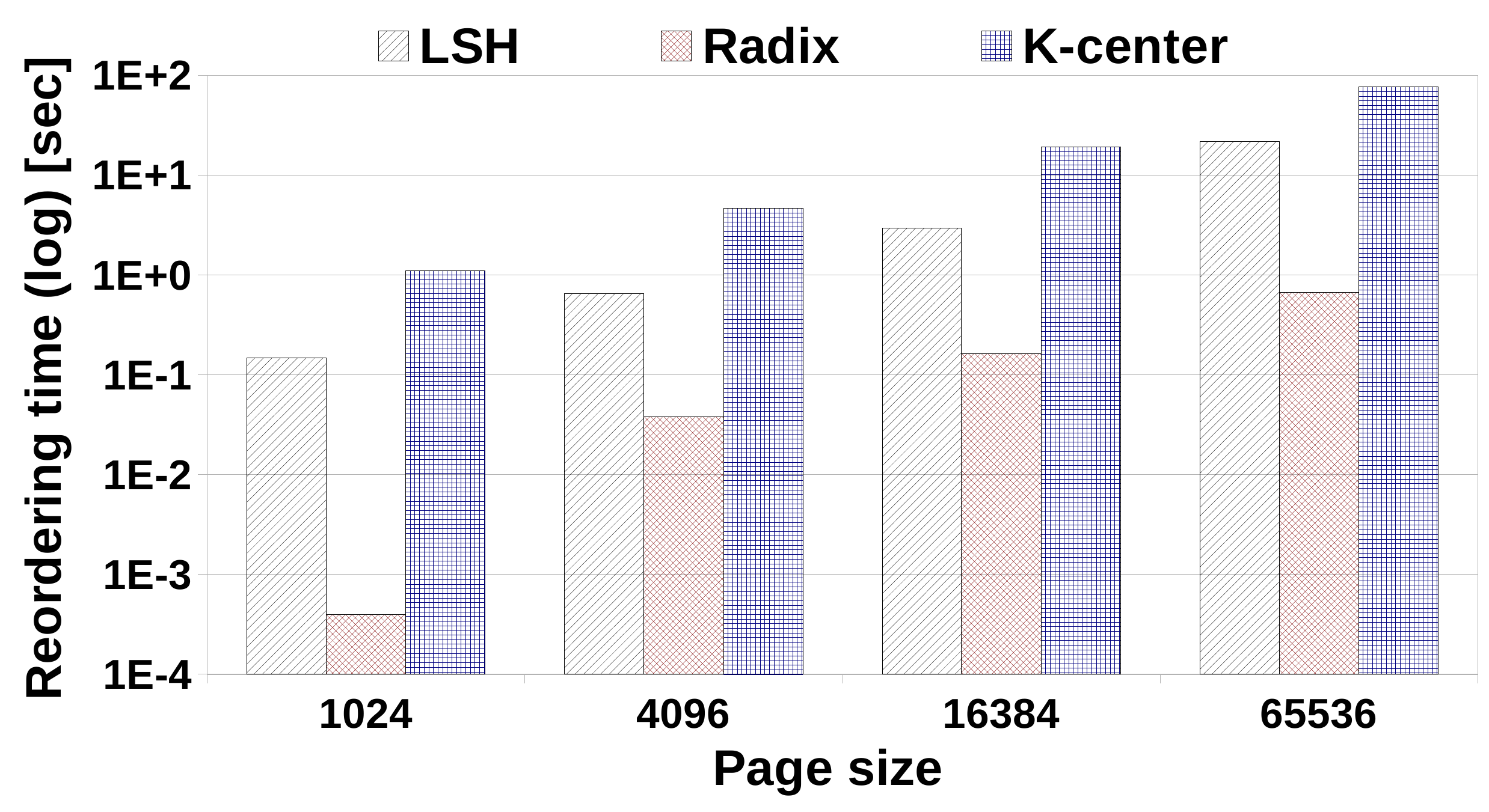}\label{fig:opt-time-comp}}
\subfloat[]{\includegraphics[width=0.48\textwidth]{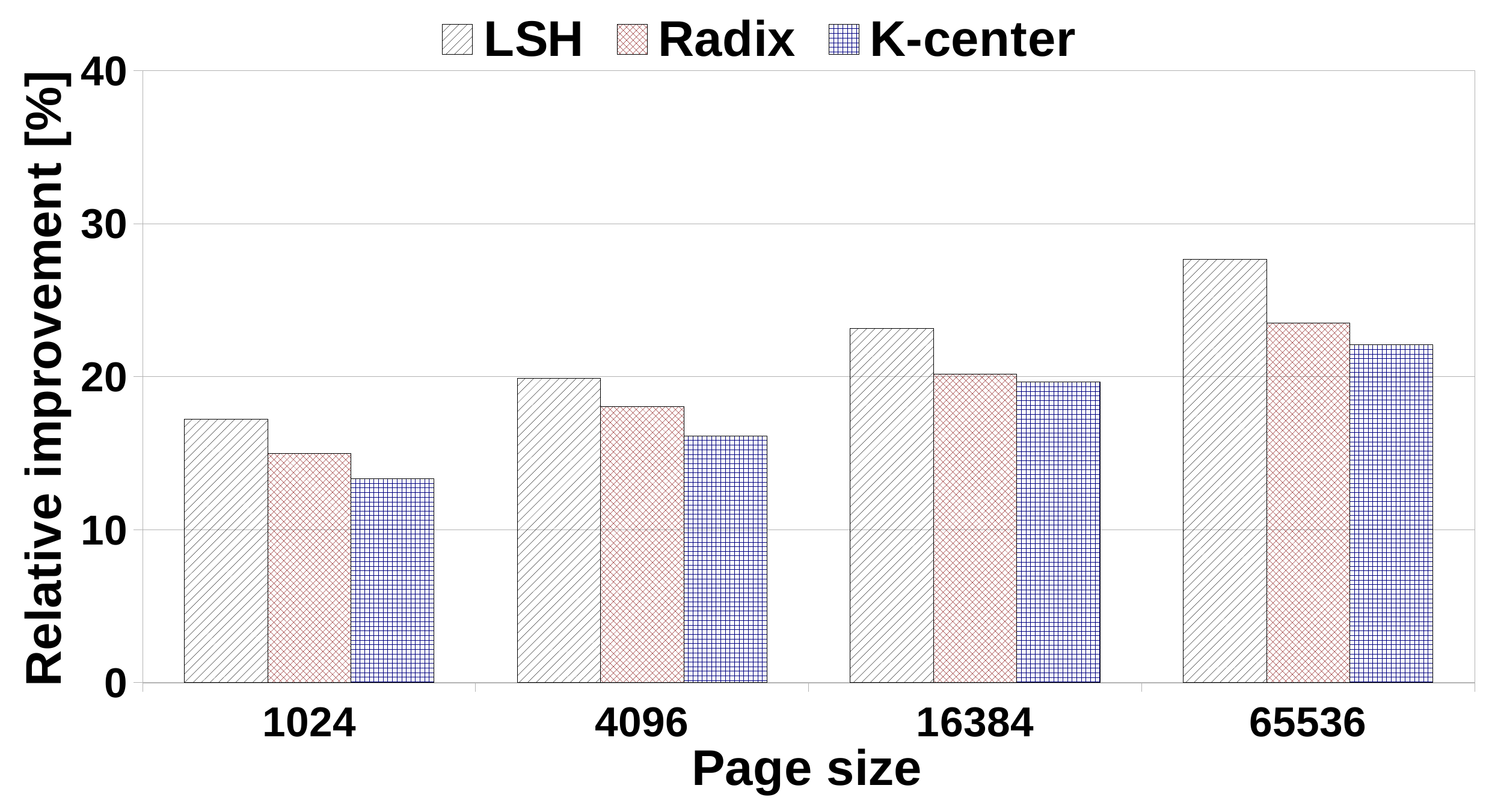}\label{fig:opt-page-miss}}
\caption{Page-level reordering heuristics comparison: (a) Execution time. (b) Relative improvement over basic LRU.}\label{fig:opt-chunk}
\end{center}
\end{figure*}

\subsection{Reordering Heuristics Comparison}\label{experiments:reordering}

We evaluate the performance of the proposed reordering heuristics -- locality-sensitive hashing (LSH), radix sort (Radix), and k-center clustering (K-center) -- as a function of the page size. We measure the reordering execution time and the relative improvement over vector-level basic LRU. In order to evaluate only the reordering effect, we do not include batching in these experiments. The results for \texttt{skewed} with memory budget of $1\%$ from the model size are depicted in Figure~\ref{fig:opt-chunk}.

\textbf{Reordering time.}
Figure~\ref{fig:opt-time-comp} shows the execution time of the reordering heuristics. As expected, when the page size, i.e., the number of vectors $\vec{u}_{i}$ considered together, increases, the reordering time increases for all the methods. Overall, Radix is the most scalable method. It executes in less than a second even for pages with $2^{16}$ vectors. K-center is infeasible for more than $1024$ vectors---the reordering time starts to dominate the processing time. While more scalable than K-center, LSH runs in more than $10$ seconds for $2^{16}$ vectors. Thus, from an execution time point-of-view, Radix is clearly the winner---it is faster than the others by $1$ to $3$ orders of magnitude.

\textbf{Improvement over basic LRU.}
We measure the number of page misses to model $V$ for each of the heuristics and compare against the number of page misses corresponding to the basic LRU replacement policy. Figure~\ref{fig:opt-page-miss} depicts the relative improvement. As expected, when the $U$ page size increases, the improvement over LRU increases since more vectors $\vec{u}_{i}$ are grouped together, thus, there are more opportunities for access sharing. K-center remains the worst method even in this category. However, LSH provides the largest reduction in page misses over LRU---almost $30\%$ for large page sizes. The reduction corresponding to Radix is around $20\%$.

Given that LSH has a much higher overhead and the relatively small reduction in page misses over Radix, we consider only Radix in further experiments. Moreover, LSH requires careful tuning of its many parameters in order to achieve these results. This is not the case for Radix which has no tunable parameter.

\begin{figure*}[htbp]
\begin{center}
\subfloat[]{\includegraphics[width=0.48\textwidth]{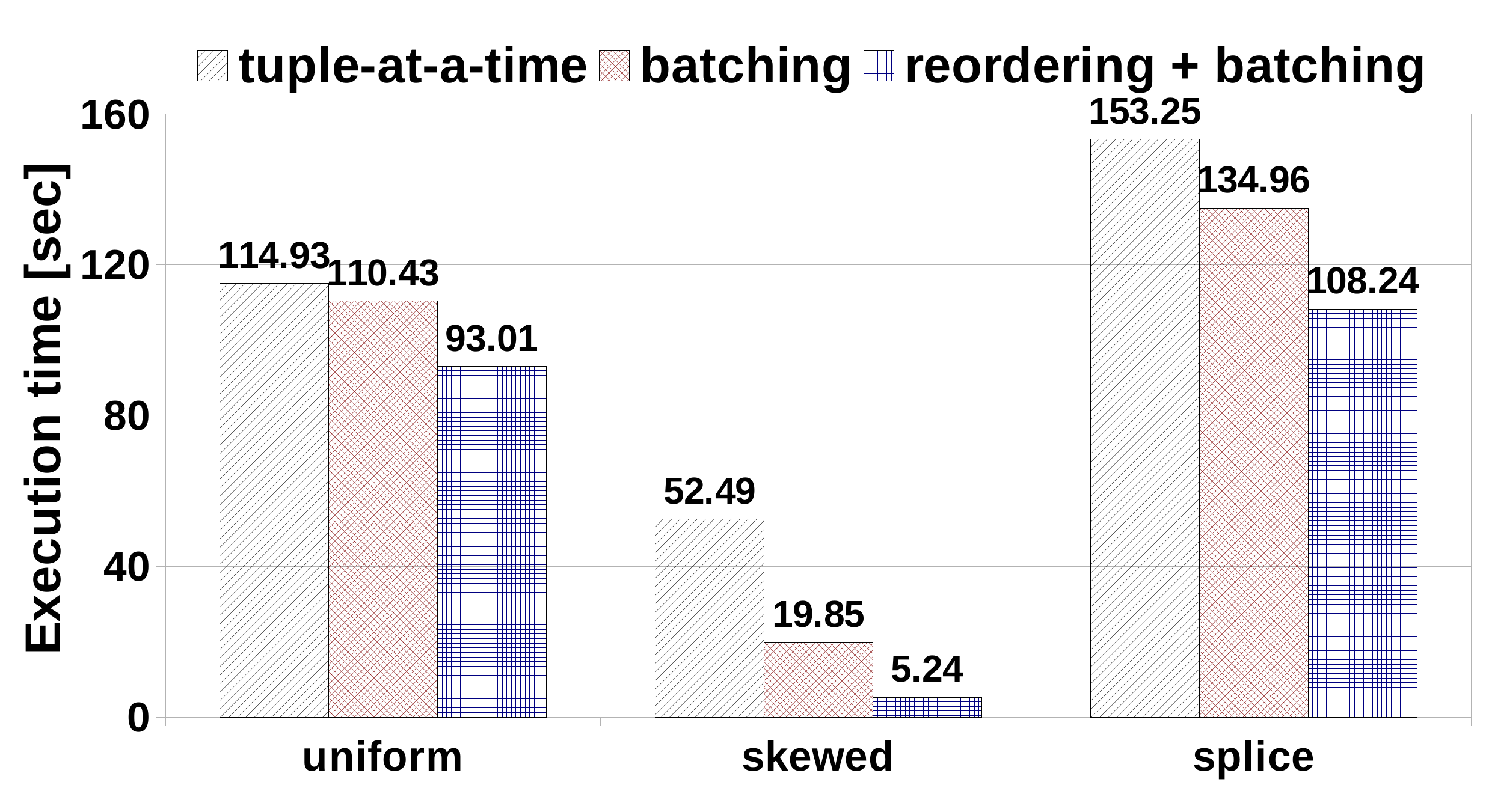}\label{fig:order-breakdown}}
\subfloat[]{\includegraphics[width=0.48\textwidth]{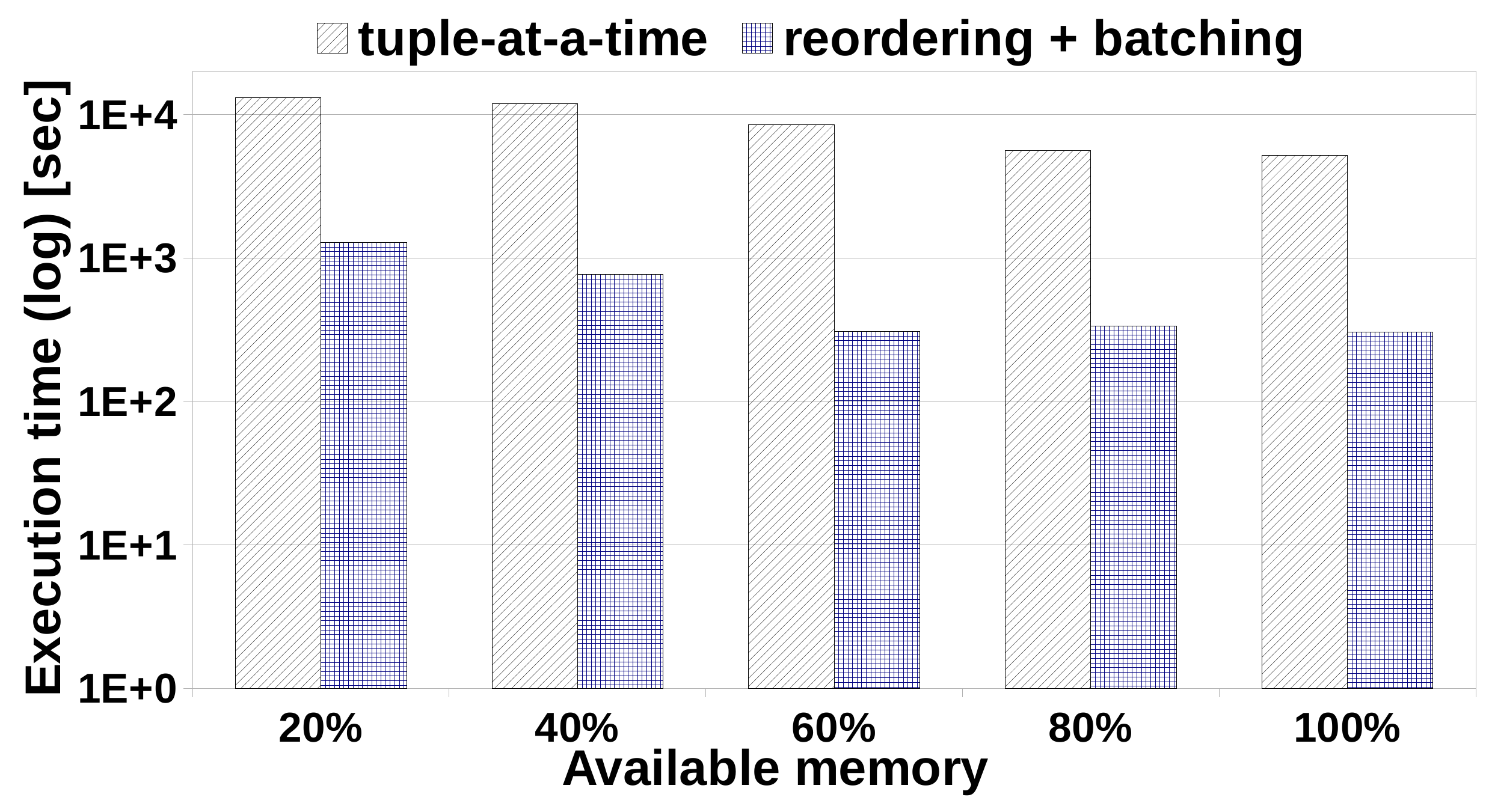}\label{fig:order-end2end}}
\caption{Dot-product join execution time: (a) Breakdown per page. (b) Over the full dataset.}\label{fig:dp-reorder-batch}
\end{center}
\end{figure*}

\subsection{Dot-Product Join Evaluation}

We evaluate the execution time of the complete dot-product join operator at page-level as well as over the full dataset. Figure~\ref{fig:order-breakdown} depicts the page-level improvement brought by batching and reordering over the naive tuple-at-a-time processing with LRU replacement. The memory budget is $20\%$ of the model size, while the page size is $4096$. Batching-only is executed over the arbitrary order of the $U$ vectors. Its benefits are considerably higher for the \texttt{skewed} dataset because of the larger overlap between narrow vectors---$300$ non-zero entries compared to $3000$ for \texttt{uniform}. Reordering reduces the execution time further since it allows for batches with larger sizes---the overhead of Radix reordering is included in the results. The one order of magnitude difference between tuple-at-a-time processing and dot-product join translates into an order of magnitude difference over the full dataset. Figure~\ref{fig:order-end2end} depicts the results for \texttt{skewed} as a function of the memory budget. When the available memory reaches a threshold at which the heavy accessed parts of the model can be buffered in memory, e.g., $60\%$ for \texttt{skewed}, there is little improvement with an additional increase.

\subsection{Dot-Product Join in SGD}\label{ssec:experiments:sgd}

Figure~\ref{fig:end-to-end} depicts the contribution of dot-product to the overall SGD computation for LR over \texttt{uniform} and \texttt{splice}, and for LMF over \texttt{matrix}, respectively. In addition to dot-product, SGD includes gradient computation and model update---which also requires secondary storage access. While we update the model for every example, the buffer manager is responsible for flushing dirty pages to secondary storage when memory is not available. As the memory budget for buffering the model increases, the dot-product and SGD time per iteration drop in all the cases. The relative contribution of dot-product to SGD is highly-sensitive to the memory budget and the characteristics of the dataset. For \texttt{uniform} (Figure~\ref{fig:uniform-end2end}), dot-product takes as little as one third of the SGD execution time at $60\%$ memory budget. In this case, the overhead of updating the model to storage dominates the execution time. For \texttt{splice} (Figure~\ref{fig:splice-end2end}), SGD is dominated by dot-product computation. This is because the model is considerably smaller, while the size of $U$ and the access sparsity are larger than for \texttt{uniform}. Due to the extreme model size, the only experiments we perform on \texttt{matrix} are for memory budgets of $10\%$ and $20\%$ (Figure~\ref{fig:lmf-end2end}). The results follow a similar trend.

\begin{figure*}[htbp]
\begin{center}
\subfloat[]{\includegraphics[width=0.33\textwidth]{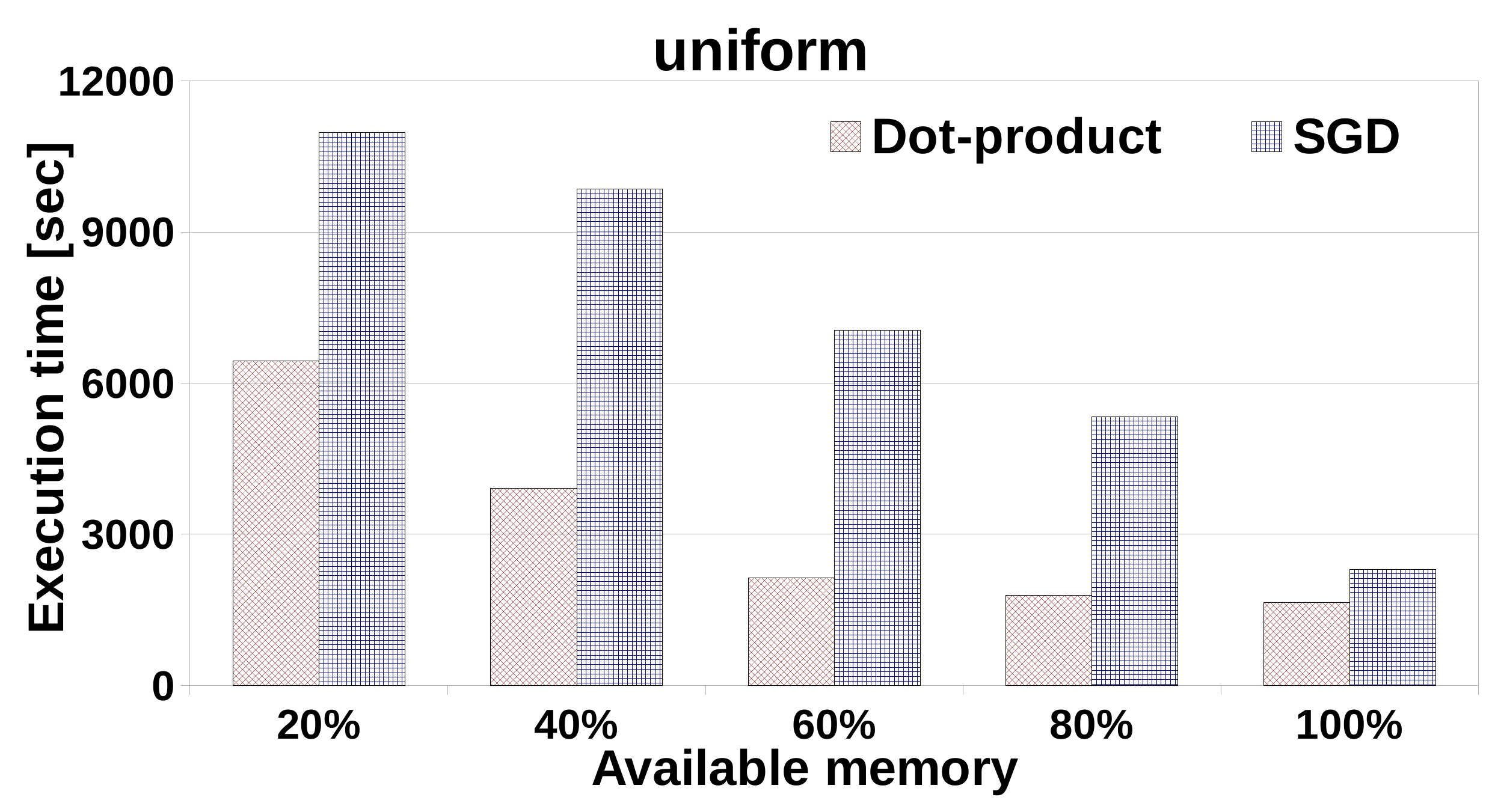}\label{fig:uniform-end2end}}
\subfloat[]{\includegraphics[width=0.33\textwidth]{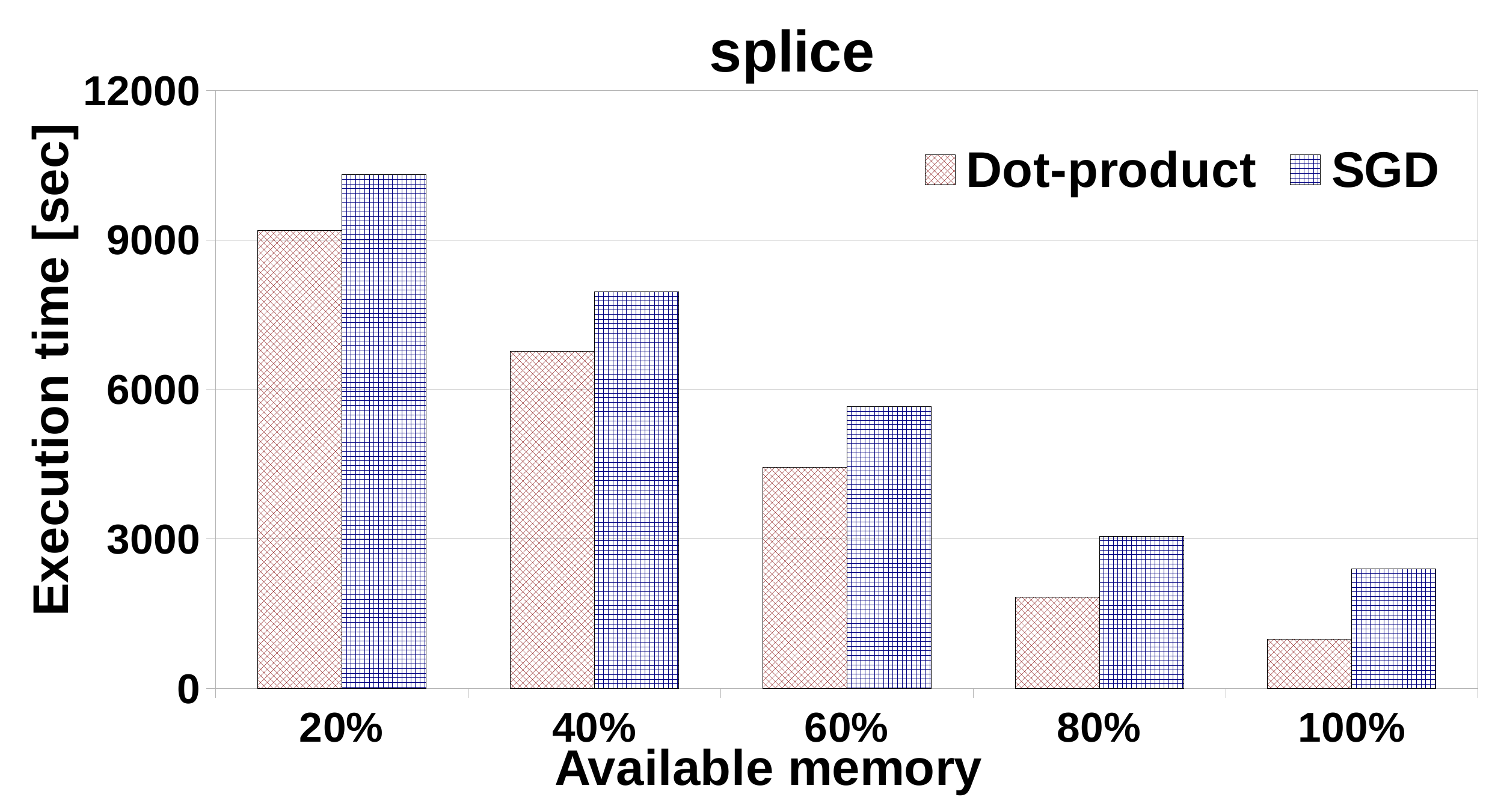}\label{fig:splice-end2end}}
\subfloat[]{\includegraphics[width=0.33\textwidth]{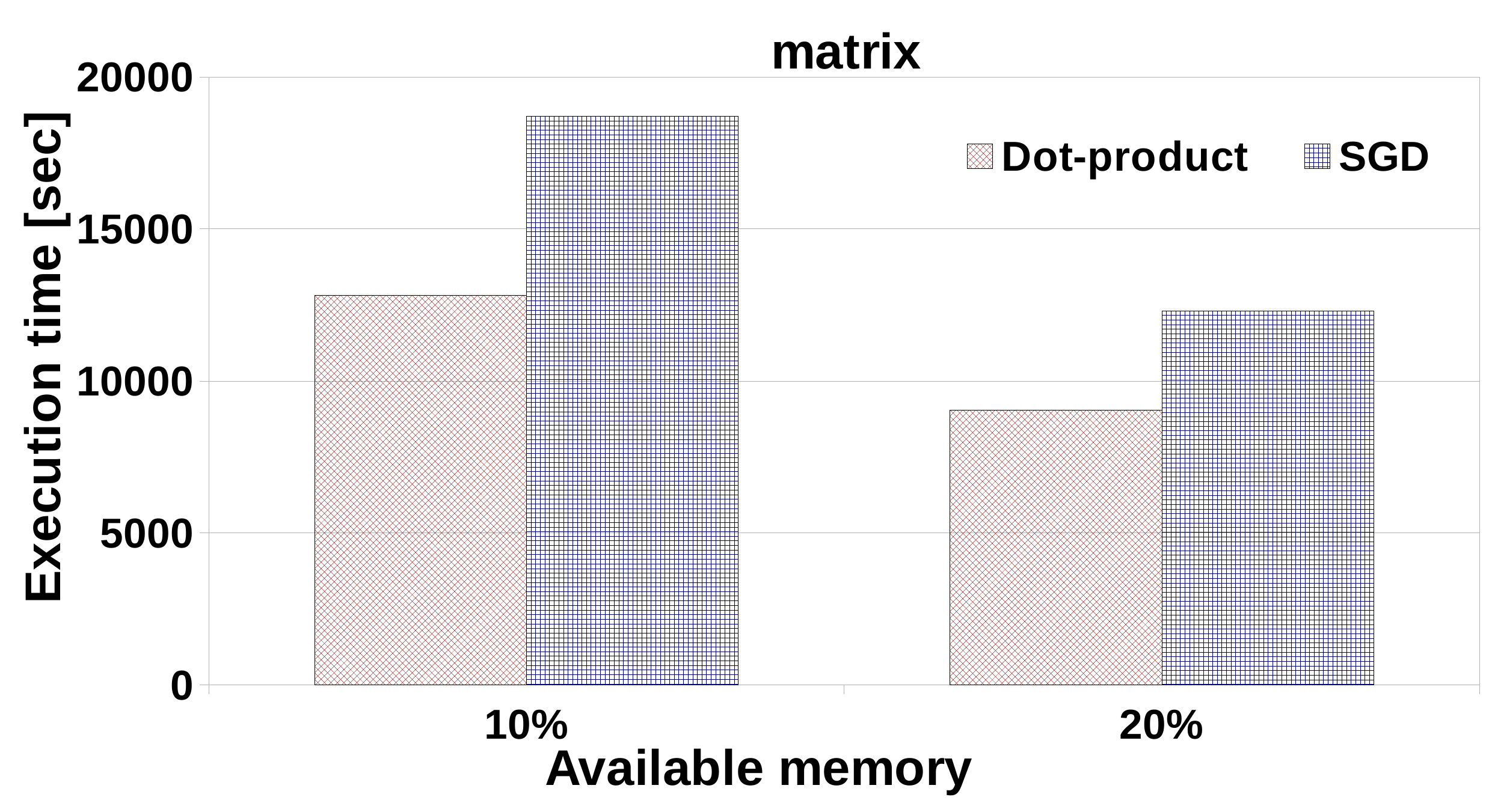}\label{fig:lmf-end2end}}
\caption{Dot-product and SGD execution time per iteration: (a) \texttt{uniform}; (b) \texttt{splice}; (c) \texttt{matrix}. While the model is updated for every example, not all dimensions get updated and not all updated dimensions are written back.}
\label{fig:end-to-end}
\end{center}
\end{figure*}

\subsection{Dot-Product Join vs. Alternatives}\label{ssec:experiments:comparison}

We validate the efficiency of the proposed dot-product join operator by comparing its execution time against that of the alternative solutions presented in Section~\ref{sec:baseline}. We have implemented relational (R) and \texttt{ARRAY}-relation in PostgreSQL (PG), relational in MonetDB~\cite{monetdb}, and a complete array solution in SciDB~\cite{scidb}. While PostgreSQL is a legacy database server with an extended set of features -- including support for \texttt{ARRAY} data type -- MonetDB is a modern column-store database designed for modern architectures, e.g., multi-core CPUs and large memory capacity. SciDB is a parallel database built on the array data model that has native support for linear algebra operations over massive arrays. This allows for the execution of the SpMV kernel as a simple function call independent of the size of the operands, i.e., out-of-core Matlab or R. The memory budget is set to $40\%$ of the model size across all the GLADE implementations. PostgreSQL is configured with 10 GB of buffer memory. Since MonetDB uses memory mapping for its internal data structures, it is not possible to limit its memory usage since system swapping is automatic. This gives a tremendous advantage to MonetDB when the model fits in memory. Indexes are built on the \texttt{index} attribute of all the model tables for PostgreSQL and MonetDB. The index building time is not included. SciDB is configured as a single server without any memory constraints---it can use all the available memory.

Table~\ref{tbl:comp} summarizes the execution time across all the datasets considered in the experiments. If the execution does not finish in 24 hours, we include N/A in the table. The proposed dot-product join operator is the only solution that finishes processing for all the datasets in the allocated time. It is also, in most cases, the fastest. The PostgreSQL and SciDB implementations are at the other extreme. For PG \texttt{ARRAY}-relation (PG A) and SciDB, only \texttt{MovieLens} finishes within 24 hours. In the case of PG A, the main reason for this is the immense size of the intermediate \texttt{ARRAY}-relation join over which the aggregation is computed. SciDB seems incapable to cope with large vectors having millions of dimensions. This is a problem not only for the SpMV kernel, but also for ingesting such highly-dimensional arrays. The PG relational solution is more efficient---10$\times$ faster on \texttt{MovieLens} and finishing \texttt{splice} in reasonable time. However, the model sizes of \texttt{MovieLens} and \texttt{splice} are the smallest among all datasets. MonetDB achieves decent runtime for all the datasets on the LR task. For the small model in \texttt{splice}, MonetDB finishes the fastest among all the solutions. This is because the computation can be executed completely in memory. When it comes to large models, while MonetDB is comparable with dot-product join on \texttt{uniform}, it is 6$\times$ slower on \texttt{skewed}. This shows that MonetDB is not able to achieve the gains of data reordering as the dot-product join operator does. For truly large models in LMF tasks, however, MonetDB fails to execute them efficiently because it materializes the immense intermediate join results. Since LMF requires a join between the $L$ matrix, the $R$ matrix, and the data (Section~\ref{ssec:problem:grad-descent}), even a small dataset such as \texttt{MovieLens} can result in an intermediate join size of $6K\times 4K\times 1000$ (192G) in the case of rank $1000$. Notice also that -- even though the relational solution generates competitive results in some cases -- it cannot generate the result in a non-blocking manner, as dot-product join does.


\begin{table*}[htbp]
  \begin{center}
    \begin{tabular}{l||r|rr|r|r|r}

      & GLADE R & PG R & PG A & MonetDB R & SciDB A & DOT-PRODUCT JOIN \\
	
    \hline
    \hline

    \texttt{uniform} & 37851 & N/A & N/A & 4280 & N/A & \textbf{3914} \\
	
    \texttt{skewed} & 13000 & N/A & N/A & 4952 & N/A & \textbf{786} \\
	
    \texttt{matrix} & N/A & N/A & N/A & N/A & N/A & \textbf{9045} \\

    \texttt{splice} & 9554 & 81054 & N/A & \textbf{2947} & N/A & 6769 \\
	
    \texttt{MovieLens} & 905 & 5648 & 72480 & N/A & 1116 & \textbf{477} \\
	
    \hline
    \end{tabular}
  \end{center}

\caption{Dot-product join execution time (in seconds) across several relational (R) and array (A) solutions. N/A stands for not finishing the execution in $24$ hours.}\label{tbl:comp}
\end{table*}

\subsection{Discussion}\label{ssec:experiments:discussion}

Our experiments identify Radix sort as the fastest reordering heuristic and the only one scalable to large batches. The improvement over tuple-at-a-time processing with basic LRU replacement is larger in this case. While LSH achieves the largest reduction in page misses, Radix is not far behind. The reduction in dot-product computation execution time is as much as an order of magnitude when reordering and batching are combined. Independently and when integrated in SGD, the dot-product join operator's performance degrades gracefully when the memory budget reduces below a threshold at which the model cannot be buffered in memory. Out of all the alternatives, the dot-product join operator is the only solution that can handle Big Models in a scalable fashion. The relational implementation in GLADE is -- in general -- an order of magnitude or more slower than dot-product join, while the relational and \texttt{ARRAY}-relation PostgreSQL versions and SciDB do not finish execution even after $24$ hours---except for small models.  MonetDB achieves efficient execution time for small and intermediate model sizes with a massive memory budget. However, it fails miserably on truly large models---the problem addressed by dot-product join.

\section{Related Work}\label{sec:rel-work}

\textbf{In-database analytics.}
There has been a sustained effort to add analytics functionality to traditional database servers over the past years. MADlib~\cite{mad-skills,madlib} is a library of analytics algorithms built on top of PostgreSQL. It includes gradient descent optimization functions for training generalized linear models such as LR and LMF~\cite{bismarck}. This is realized through the UDF-UDA extensions existent in PostgreSQL. The model is represented as the state of the UDA. It is memory-resident during an iteration and materialized as an array attribute across iterations. This is not possible for Big Model analytics because the model cannot fit in memory and PostgreSQL has strict limitations on the maximum size of an attribute. GLADE~\cite{igd-glade-ola} follows a similar approach. Distributed learning frameworks such as MLlib~\cite{mllib} and Vowpal Wabbit~\cite{vowpal-wabbit} represent the model as a program variable and allow the user to fully manage its materialization. Thus, they cannot handle Big Models directly. Since the vectors are memory-resident, the dot-product is computed by a simple invocation of \textit{Algorithm Dot-Product}. In the case of MADlib, this is done inside the UDA code. Computing dot-product and other linear algebra operations as UDAs is shown to be more efficient than the relational and stored procedure solutions for low-dimensional vectors in~\cite{ordonez:UDA}. SLACID~\cite{slacid} is a solution to implement sparse matrices and linear algebra operations inside a column-oriented in-memory database in which the emphasis is on the storage format. Adding support for efficient matrix operations in distributed frameworks such as Hadoop~\cite{Seo:Hama} and Spark~\cite{Lele:matrix-op} has been also investigated recently. We consider a more general problem -- dot-product is a sub-part of matrix multiplication -- in a centralized environment. Moreover, gradient descent optimization does not involve matrix operations---only dot-product.

\textbf{Big Model parallelism.}
Parameter Server~\cite{parameter-server-2,parameter-server} is the first system that addresses the Big Model analytics problem. Their approach is to partition the model across the distributed memory of multiple \textit{parameter servers}---in charge of managing the model. The training vectors $U$ are themselves partitioned over multiple workers. A worker iterates over its subset  of vectors and -- for each non-zero entry -- makes a request to the corresponding parameter server to retrieve the model entry. Once all the necessary model entries are received, the gradient is computed, the model is updated and then pushed back to the parameter servers. Instead of partitioning the model across machines, we use secondary storage. Minimizing the number of secondary storage accesses is the equivalent of minimizing network traffic in Parameter Server. Thus, the optimizations we propose -- reordering and batching -- are applicable in a distributed memory environment. They are not part of Parameter Server. In STRADS~\cite{Lee:STRADS}, parameter servers are driving the computation, not the workers. The servers select the subset of the model to be updated in an iteration by each worker. In our case, this corresponds to ignoring some of the non-zero entries in a vector $\vec{u}_{i}$ to further reduce I/O. We plan to explore strategies to select the optimal parameter subset in the future.

\textbf{Learning over normalized data.}
The integration of relational join with gradient, i.e., dot-product, computation has been studied in~\cite{sgd-over-join,sgd-over-join-2,bgd-over-factorized-join}. However, the assumption made in all these papers is that the vectors $\vec{u}_{i}$ are vertically partitioned along their dimensions. A join is required to put them together before computing the dot-product. In~\cite{sgd-over-join}, the dot-product computation is pushed inside the join and only applicable to BGD. The dot-product join operator adopts the same idea. However, this operator has to compute the dot-product which is still evaluated inside a UDA in~\cite{sgd-over-join}. The join is dropped altogether in~\cite{sgd-over-join-2} when similar convergence is obtained without considering the dimensions in an entire vertical partition. A solution particular to linear regression is shown to be efficient to compute when joining factorized tables in~\cite{bgd-over-factorized-join}. In all these solutions, the model is small enough to fit entirely in memory. Moreover, they work exclusively for BGD.

\textbf{Joins.}
Dot-product join is a novel type of join operator between an ARRAY attribute and a relation. We are not aware of any other database operator with the same functionality. From a relational perspective, dot-product join is most similar to index join~\cite{dbbook}. However, for every vector $\vec{u}_{i}$, we have to probe the index on model $V$ many times. Thus, the number of probes can be several orders of magnitude the size of $U$. The proposed techniques are specifically targeted at this scenario. The batched key access join\footnote{\url{https://dev.mysql.com/doc/refman/5.6/en/bnl-bka-optimization.html}} in MySQL is identical to our batching optimization applied at vector level. However, it handles a single probe per tuple and its reordering is aimed at generating sequential storage access for $V$. Array joins~\cite{Duggan:array-joins} are a new class of join operators for array databases. While it is possible to view dot-product join as an array join operator, the main difference is that we consider a relational system and push the aggregation inside the join. This avoids the materialization of the intermediate join result.

\textbf{SpMV kernel.}
As we have already discussed in the paper, the dot-product join operator is a constrained formulation of the standard sparse matrix vector (SpMV) multiplication problem. Specifically, the constraint imposes the update of the vector after each multiplication with a row in the sparse matrix. This makes impossible the direct application of SpMV kernels to Big Model analytics---beyond BGD. Moreover, we consider the case when the vector size goes beyond the available memory. SpMV is an exhaustively studied problem with applications to high performance computing, graph algorithms, and analytics. An extended discussion of the recent work on SpMV is presented in~\cite{ssd-spmv}---on which we draw in our discussion. \cite{spmv-multicore} and~\cite{phi-spmv} propose optimizations for SpMV in multicore architectures, while~\cite{spmv-partitioning} and~\cite{spmv-dist-partitioning} optimize distributed SpMV for large scale-free graphs with 2D partitioning to reduce communication between machines. Array databases such as SciDB~\cite{scidb} and Rasdaman~\cite{rasdaman} support SpMV as calls to optimized linear algebra libraries such as Intel MKL and Trilinos. There has also been preliminary research on accelerating matrix multiplication with GPUs~\cite{spmv-vldb} and SSDs~\cite{ssd-spmv}, showing that speedups are limited by I/O and setup costs. None of these works store the vector in secondary storage.

\section{Conclusions and Future Work}\label{sec:conclusions}

In this paper, we propose a database-centric solution to handle Big Models, where the model is offloaded to secondary storage. We design and implement the first dot-product join physical database operator that is optimized to execute secondary storage array-relation dot-products effectively. We prove that identifying the optimal access schedule for this operator is NP-hard. We propose dynamic batching and reordering techniques to minimize the overall number of secondary storage accesses. We design three reordering heuristics -- LSH, Radix, and K-center -- and evaluate them in terms of execution time and reordering quality. We experimentally identify Radix as the most scalable heuristic with significant improvement over basic LRU. The dot-product join operator's performance degrades gracefully when the memory budget reduces. Out of all the alternatives, the dot-product join operator is the only solution that can handle Big Models in a scalable fashion. Alternative solutions are -- in general -- an order of magnitude or more slower than dot-product join.

While the focus of the paper is on vector dot-product in the context of gradient descent optimization, the proposed solution is general enough to be applied to any SpMV kernel operating over highly-dimensional vectors. Essentially, the dot-product join operator is the first SpMV kernel that accesses the vector from secondary storage. Previous solutions have -- at most -- considered accessing the sparse matrix out-of-memory. From a database perspective, dot-product join is the first join operator between an array and a relation. Existing joins between an array attribute and a relation are tremendously ineffective. The experimental results in the paper confirm this. Thus, we find dot-product join extremely relevant in the context of polystores where multiple storage representations are combined inside a single system. Dot-product join eliminates the conversion between representations necessary to execute hybrid operations.

In future work, we plan to include data parallelism in the dot-product join operator by partitioning the array over multiple threads that update the model concurrently. While lack of parallelism may be seen as a limitation of the dot-product join operator -- especially in a Big Data setting -- we argue that addressing the sequential problem first is a mandatory requirement in order to design a truly scalable parallel solution. We also plan to investigate partition strategies from SpMV kernels to improve the disk access pattern to the vector further. Since SSD storage has improved dramatically over the last years, storing the matrix and the vector on SSDs is another topic for future research. Finally, we plan to expand the range of models covered by dot-product join beyond LR and LMF. As long as a dot-product operation is required between highly-dimensional vectors, the dot-product join operator is a solution to consider.

{\small
\bibliographystyle{abbrv}

}

\end{document}